\documentclass[11pt,a4paper]{article}
\usepackage[margin=2.0cm]{geometry}
\usepackage{amsmath}
\usepackage{amssymb}
\usepackage{amsthm}
\usepackage{dsfont}
\usepackage{color}
\usepackage{amsmath}
\usepackage{amsfonts}
\usepackage{amssymb}
\usepackage{bbm}
\usepackage[procnumbered, linesnumbered,algoruled]{algorithm2e}
\usepackage{graphicx}
\graphicspath{ {./images/} }
\usepackage{algorithmic}
\usepackage{xcolor}

\usepackage[ colorlinks,linkcolor={red!50!black},citecolor={blue!50!black},urlcolor={blue!80!black}]{hyperref}
\usepackage{cleveref}

\newtheorem{theorem}{Theorem}[section]

\newtheorem{lemma}[theorem]{Lemma}

\newtheorem{claim}[theorem]{Claim}

\newtheorem{definition}[theorem]{Definition}

\newcommand{\ceil}[1]{{\left\lceil#1  \right\rceil}}

\newcommand{\OPT}{\textnormal{OPT}}

\newcommand{\BPOPT}{\textnormal{\texttt{BP-OPT}}}

 \newcommand{\remove}[1]{}

\newcommand{\cA}{{\mathcal{A}}}

\newcommand{\cI}{{\mathcal{I}}}

\newcommand{\cJ}{{\mathcal{J}}}

\newcommand{\bx}{\mathbf{x}}

\newcommand{\cC}{{\mathcal{C}}}

\newcommand{\eps}{{\varepsilon}}

\newcommand{\E}{{\mathbb{E}}}
\newcommand{\R}{{\mathbb{R}}}
\newcommand{\floor}[1]{\left\lfloor #1 \right\rfloor}

\Crefname{claim}{Claim}{Claims}
\crefname{claim}{Claim}{Claims}

\bibliographystyle{plainurl}%

\title{Improved Approximation for\\ Two-dimensional Vector Multiple Knapsack }

\date{}

\author{ {Tomer Cohen}\thanks{Computer Science Department, 
	Technion, Haifa, Israel. \mbox{E-mail: {\tt tomerco20@cs.technion.ac.il}}}
 \and 
 {Ariel Kulik}\thanks{CISPA Helmholtz Center for Information Security, Germany. \mbox{E-mail: {\tt ariel.kulik@gmail.com}}}
 \and 
 {Hadas Shachnai}\thanks{ Computer Science Department, 
 	Technion, Haifa, Israel. \mbox{ E-mail: {\tt hadas@cs.technion.ac.il}}}}

\begin{document}

\maketitle

\begin{abstract}
We study the {\sc uniform $2$-dimensional vector multiple knapsack} (2VMK) problem, a natural variant of {\sc multiple knapsack} arising in real-world applications such as virtual machine placement. The input for 2VMK is a set of items, each associated with a $2$-dimensional  {\em weight} vector and a 
positive {\em profit}, along with $m$ $2$-dimensional bins of uniform (unit) capacity in each dimension. The goal is to find an assignment of a subset of the items to the bins, such that the total weight of items assigned to a single bin is at most one in each dimension, and the total profit is maximized.

Our main result is a  $(1- \frac{\ln 2}{2} - \eps)$-approximation algorithm for 2VMK, for every fixed $\eps > 0$, thus improving the best known ratio of $(1 - \frac{1}{e}-\eps)$ which follows as a special case from a result of [Fleischer at al.,\ MOR 2011]. Our algorithm  relies on an adaptation of the Round$\&$Approx framework of [Bansal et al., SICOMP 2010], originally designed for set covering problems, to maximization problems. The algorithm uses randomized rounding of a configuration-LP solution to assign items  to $\approx m\cdot  \ln 2 \approx 0.693\cdot m$ of the bins, followed by a reduction to the ($1$-dimensional) {\sc Multiple Knapsack} problem for assigning items to the remaining bins.
\end{abstract}
\section{Introduction}
\label{sec:intro}
The {\sc knapsack} problem and its variants have attracted much attention
in the past four decades, and have been instrumental in the development of approximation algorithms. In this paper we study a variant of {\sc knapsack} which uses
components of two well studied knapsack problems: {\sc multiple knapsack} and   {\sc 2-dimensional  knapsack}.

An instance of uniform {\sc multiple knapsack} consists of a set $I$ of  items  of non-negative profits and weights in $[0,1]$, as well as $m$ uniform (unit size) bins. We seek a subset of the items of maximal total profit which can be packed in the $m$ bins. 
An instance of {\sc $2$-dimensional  knapsack} is a set $I$ of items, each has a  $2$-dimensional weight in $[0,1]^2$, and a non-negative  profit.  
The objective is to find a subset of the items whose total weight is at most one in each dimension, such that the total profit is maximized. 
We study a variant of uniform {\sc multiple knapsack} where each bin is a $2$-dimensional knapsack, thus generalizing both problems. 

Formally, an instance of {\sc uniform $2$-dimensional multiple knapsack} ($2$VMK) is a tuple $\cI = (I,w,p,m)$, where $I$ is a set of items, $w:I\rightarrow [0,1]^2$ is a $2$-dimensional weight function, $p:I\rightarrow \mathbb{R}_{\geq 0}$ is a profit function, and $m$ is the number of bins. 
For any $k \in \mathbb{N}$, let $[k] = \{1,2,\ldots,k\}$. 
A {\em solution} for the instance $(I,w,p,m)$ is a collection of subsets of items $S_1, \ldots , S_m \subseteq I$ such that $w(S_b)= \sum_{i \in S_b} w(i) \leq (1,1)$ for all $b\in[m]$.\footnote{We say that $(a_1,a_2)\leq (b_1,b_2)$ if $a_1\leq b_1$ and $a_2\leq b_2$.} 
Our objective is to find a solution $S_1, \ldots , S_m$ which maximizes the total profit, 
given by $p\left(\bigcup_{b\in [m]}S_b\right)$.

A natural application of $2$VMK arises in 
the cloud computing environment. 
Consider a data center consisting of $m$ hosts (physical machines). Each 
host has an available amount of processing power (CPU) and limited memory.
For simplicity, 
these amounts can be scaled to one unit.
The data center administrator has a queue of client requests to assign {\em virtual machines} (VMs) to the hosts.\footnote{This is also known as virtual machine {\em instantiation}~\cite{camati2014solving}.} 
Each VM has a demand for processing power and memory, and its execution is associated with some profit. The administrator needs to assign a subset of the VMs to the hosts, such that the total processing power and memory demands on each host do not exceed
the available amounts, and the profit gained from the VMs 
is maximized (see~\cite{camati2014solving} for other optimization objectives in this setting). Another application of 2VMK comes from spectrum allocation in cognitive radio networks~\cite{song2008multiple}. 

Our goal is to develop an efficient polynomial-time approximation algorithm for $2$VMK. Let $\alpha \in (0,1]$ be a constant. An algorithm $\cA$ is an $\alpha$-approximation algorithm for $2$VMK if for any instance $\cI$ of $2$VMK it returns in polynomial-time a solution of profit at least $\alpha \cdot \OPT(\cI)$, where $ \OPT(\cI)$ is the optimal profit for $\cI$. A {\em polynomial-time approximation scheme} (PTAS) is an infinite family $\{A_{\eps} \}$ of $(1-\eps)$-approximation algorithms, one for each $\eps >0$. A weaker notion is that of a {\em randomized} $\alpha$-approximation algorithm, where the algorithm always returns a solution, but the profit is at least 
$\alpha \cdot \OPT(\cI)$ with some constant probability.

As the classic {\sc Multiple Knapsack} problem admits an {\em efficient PTAS} (EPTAS), even for instances with arbitrary bin capacities~\cite{jansen2010parameterized,jansen2012fast}, and  the single bin problem, i.e., {\sc $2$-dimensional Knapsack} has a PTAS~\cite{FC84}, a natural question is whether $2$VMK admits a PTAS as well. 
By a simple reduction from  {\sc $2$-dimensional vector bin packing} (2VBP), we show that such a PTAS is unlikely to exist.\footnote{We give the proof of Theorem~\ref{lem:APXHard} in \Cref{sec:APX}.}
\begin{theorem}
\label{lem:APXHard} Assuming  $\textnormal{P}\neq \textnormal{NP}$ there is no  PTAS for \textnormal{$2$VMK}.
\end{theorem}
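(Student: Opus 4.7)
My plan is a polynomial-time reduction from $2$-dimensional vector bin packing (2VBP), which is known to be APX-hard in the asymptotic sense (e.g., Woeginger 1997, Chleb\'ik--Chleb\'ikov\'a 2007). Concretely, one can fix a constant $\eps_0 > 0$ for which it is NP-hard, given a 2VBP instance $I$ with $n$ items and an integer $m$, to distinguish the completeness case in which $I$ packs into $m$ bins from the soundness case in which $I$ requires at least $(1+\eps_0)m$ bins. I would additionally exploit the standard property of these hard instances that item weights are bounded below by a positive constant in each coordinate, so that $n = \Theta(m)$.

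Given such an instance $I$ I would build the 2VMK instance $\cI' = (I, w, \mathbf{1}, m)$, i.e., the same items with unit profits and $m$ bins. In the completeness case all $n$ items pack into $m$ bins, so $\OPT(\cI') = n$. In the soundness case, I would argue that any packing of a subset of items from $I$ into $m$ bins must omit at least $\eps_1\cdot n$ items for some absolute constant $\eps_1 > 0$, so $\OPT(\cI') \le (1-\eps_1) n$. Granting this gap, any $(1-\eps_1/2)$-approximation for 2VMK would distinguish the two cases in polynomial time, contradicting $P \neq NP$ and thus ruling out a PTAS for 2VMK.

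The main obstacle is the soundness claim: a priori, an adversary could leave out only a handful of ``awkward'' items and still pack the rest into $m$ bins, even when the full instance needs $(1+\eps_0)m$ bins. To rule this out I would lean on the gadget structure underlying the 2VBP hardness reductions: items come in blocks of constant size that must be placed together in one bin, and in any packing of a soundness instance into $m$ bins a constant fraction of these blocks must be broken, leaving $\Omega(m) = \Omega(n)$ items unpacked. If an off-the-shelf statement with exactly this structure is not readily citable, the fallback is to construct the hard 2VBP family directly---for instance, via tight packing gadgets on top of a gap version of $3$-\textsc{Dimensional Matching}---and verify the block-breaking property by hand, which turns the reduction into the ``simple'' one the paper advertises.
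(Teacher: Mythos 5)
Your approach is genuinely different from the paper's and, once a missing one-line observation is supplied, it works. The paper runs the reduction in the ``algorithmic'' direction: given a hypothetical PTAS for 2VMK, it constructs an APTAS for 2VBP (guess the optimal bin count $t$, use profits $p(i)=w_1(i)+w_2(i)$ so that a near-optimal 2VMK solution on $t$ bins captures almost all of the total weight, then First-Fit the light residue into $O(\eps)\cdot\BPOPT+O(1)$ extra bins), and then invokes Ray's no-APTAS theorem purely as a black box. You instead run a gap-preserving (Karp-style) reduction: take a hard 2VBP gap family, give every item unit profit, and show that a good 2VMK approximation would decide the gap. The trade-off is that the paper's route needs nothing about the internal structure of Ray's hard instances, whereas your route requires the gap formulation and the $n=\Theta(m)$ property (weights bounded below), which means opening up the 2VBP hardness proof rather than citing it abstractly. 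Also, you should cite Ray (2021) rather than Woeginger (1997), since the paper notes Woeginger's original proof was flawed.

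The ``main obstacle'' you identify is actually not an obstacle, and your proposed gadget-level fix is unnecessary. The soundness claim has a one-line proof: if a subset $S\subseteq I$ packs into $m$ bins and $|I\setminus S|=k$, then $I$ packs into $m+k$ bins by placing each omitted item in its own bin, so $\BPOPT(I)\le m+k$; in the soundness case $\BPOPT(I)\ge (1+\eps_0)m$, hence $k\ge \eps_0 m$. Combined with $n\le Cm$ (from the lower bound on weights) this gives $k\ge (\eps_0/C)\cdot n$, i.e.\ $\OPT(\cI')\le (1-\eps_1)n$ with $\eps_1=\eps_0/C$, with no appeal to block structure or the internals of a 3-Dimensional-Matching gadget. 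Filling this in makes your reduction clean; as written, the hedging around the soundness step is where the argument was incomplete, even though the missing step is elementary.
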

Thus, we focus on deriving the best constant factor approximation for the problem. 
For any $\eps>0$, a randomzied $(1-e^{-1}-\eps)\approx 0.632$-approximation algorithm for $2$VMK  follows from a result of~\cite{fleischer2011tight}, as a special case of the {\sc separable assignment problem} (SAP). Our main result is an improved approximation ratio for the problem.
\begin{theorem}
\label{thm:2vmk_appx_ratio}
For every fixed $\eps >0$, there is a randomized $\left(1 - \frac{\ln 2}{2} -\eps\right)\approx 0.653$-approximation algorithm for \textnormal{2VMK}.
\end{theorem}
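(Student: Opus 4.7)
The plan is to adapt the Round\&Approx framework of Bansal et al.\ from set-cover style problems to our maximization setting. First, I formulate the configuration LP for 2VMK, with a variable $x_C \geq 0$ for every feasible configuration $C \subseteq I$ (i.e., a set of items with $w(C) \leq (1,1)$), the constraints $\sum_C x_C \leq m$ and $\sum_{C \ni i} x_C \leq 1$ for every $i \in I$, and objective $\max \sum_C x_C\, p(C)$. Although this LP has exponentially many variables, a $(1-\eps)$-approximate solution $\{x_C^*\}$ of value $P^* \geq (1-\eps)\OPT$ can be computed in polynomial time via the ellipsoid method on the dual, using the PTAS for single-bin 2-dimensional knapsack of~\cite{FC84} as an approximate separation oracle. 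I write $y_i = \sum_{C \ni i} x_C^* \in [0,1]$ for the fractional selection of item $i$.

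I then partition the $m$ bins into two groups: a set $A$ of $\lambda m$ bins for randomized rounding, and a set $B$ of the remaining $(1-\lambda) m$ bins for a reduction to 1-dimensional Multiple Knapsack, where $\lambda = \ln 2$ will turn out to be the optimal choice. On $A$, I independently sample a configuration in each bin with probability $x_C^* / m$, placing the items of the sampled configuration into that bin (retaining an arbitrary single copy of an item when the same item is sampled into several bins, so each bin still receives a feasible configuration). As $m \to \infty$, each item $i$ is placed in some bin of $A$ with probability tending to $1 - e^{-\lambda y_i} = 1 - 2^{-y_i}$, so the expected profit collected from $A$ is $\sum_i p(i)\,(1 - 2^{-y_i})$. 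On $B$, I solve a 1-dimensional MK sub-problem on the items not yet assigned, invoking the EPTAS for MK of~\cite{jansen2010parameterized,jansen2012fast}. The reduction to 1-D MK exploits a structural decomposition: after classifying items by a threshold in one of the two dimensions, the items relevant for the residual OPT behave essentially one-dimensionally, with items ``large in both dimensions'' being limited per bin and handled separately.

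A direct expected-value calculation then shows that the combined profit from $A$ and $B$ is at least $\bigl(1 - \tfrac{\ln 2}{2} - O(\eps)\bigr)\OPT$ once $\lambda$ is optimized to $\ln 2$: informally, the RR step guarantees at least half of each item's profit in the worst case $y_i = 1$, while the additional $(1-\ln 2)m$ bins allow the MK step to recover the bulk of the profit that RR missed, with the two contributions balancing at $\lambda = \ln 2$. The constant-probability success event required for a randomized approximation algorithm follows from Markov's inequality applied to the profit shortfall relative to its expectation (or, equivalently, by repeating the randomized step a constant number of times and taking the best outcome).

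The main obstacle I expect is the reduction to 1-dimensional MK on the residual instance: in general one cannot reduce 2-D to 1-D packing without losing a constant factor, so the reduction must exploit non-trivial structure of a near-optimal packing of the items left unassigned by RR, such as a small/large classification with respect to each dimension. This reduction is tightly coupled with the RR step, since the random choices in $A$ determine both the set of items available for $B$ and, in the analysis, how much ``residual OPT'' profit remains; handling this coupling carefully, and combining it with the approximate solution to the configuration LP, will be the main technical challenge.
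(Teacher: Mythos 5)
Your high-level blueprint matches the paper: solve the configuration LP approximately, randomly round into roughly $m\ln 2$ bins, solve a residual 1-D Multiple Knapsack instance in the remaining bins, and observe that $\alpha=\ln 2$ balances the marginal gain of the two phases. But the proposal has two genuine gaps, both in exactly the places you flag as ``the main technical challenge.''

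\textbf{The reduction to 1-D MK.} You propose a ``threshold classification'' of items by one coordinate to make the residual instance behave one-dimensionally. The paper does something much simpler and quantitatively different: it sets $w'(i)=\max\{w_1(i),w_2(i)\}$ and uses the fact that any $2$-D configuration $C$ with $w(C)\le(1,1)$ can be split into two $1$-D configurations under $w'$. This gives a clean factor-$2$ loss: a set packable in $q$ bins of the 2VMK instance packs into $2q$ bins of the MK instance (Lemma~\ref{lem:solutionTranslator}). It is exactly this factor $2$ that produces the $\frac{1-\alpha}{2}$ term and, after balancing against $1-e^{-\alpha}$, the final bound $1-\frac{\ln 2}{2}$. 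Any scheme that ``recovers the bulk'' of the residual profit without a constant-factor loss would contradict the paper's APX-hardness result (Theorem~\ref{lem:APXHard}), so the threshold idea cannot be made to beat the factor $2$. You should commit to the $\max$-reduction and accept the $\frac{1}{2}$ loss in the residual phase.

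\textbf{The probabilistic analysis of the residual step.} You invoke Markov's inequality on the profit shortfall, and you also note that the set of residual items is \emph{random} and coupled to the RR choices --- but you don't resolve this coupling. This is where the paper does real work: it constructs a carefully chosen subset $Q\subseteq\OPT$ via an auxiliary LP (Q-LP), so that \emph{both} the profit of $Q\setminus T$ and the number of bins needed to pack $Q\setminus T$ (measured via the subset-oblivious functions $g_1,\dots,g_k$ from the Bansal et al.\ framework for Bin Packing, Lemma~\ref{lem:subsetAppr}) concentrate around favorable values. Crucially, the concentration comes not from Markov's inequality or even McDiarmid's inequality but from a tail bound for self-bounding functions (Lemma~\ref{lem:selfBoundLem}, Boucheron et al.); the paper explicitly remarks that McDiarmid's bound does not suffice. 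Without bounding $\BPOPT(Q\setminus T,w')$ with high probability, you cannot lower-bound the MK-optimum of the residual instance, so $\E[P_B]$ is not controlled and the Markov argument has nothing to work with. This machinery --- restricting to an $\eps$-nice instance so configurations have small profit, proving $f(C_1,\dots,C_\ell)=h(\bigcup C_t)/\eta$ is self-bounding, and then applying the tail bound separately to the profit and to each $g_t$ --- is the core of the proof and is absent from your proposal.
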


\subsection{Prior Work}
\label{sec:relWork}

The special case of $2$VMK with a single bin, i.e., {\sc $2$-dimensional Knapsack}, admits a PTAS due to Frieze and Clarke~\cite{FC84}. As shown in~\cite{KS2010}, an EPTAS for the problem is unlikely to exist.
The first PTAS for {\sc Multiple Knapsack} was presented by 
Chekuri and Khanna~\cite{chekuri2005polynomial} who also showed the problem is strongly NP-hard. The PTAS  was later improved to an EPTAS by Jansen~\cite{jansen2010parameterized,jansen2012fast}.  
For comprehensive surveys of known results on knapsack problems, see, e.g., \cite{KPP2004, cacchiani2022knapsack}.

Both {\sc Multiple Knapsack} and $2$VMK are special cases of the {\sc Separable Assignment Problem} (SAP) studied in \cite{fleischer2011tight}. The input for SAP is a set of items $I$ and $m$ bins. Each item $i\in I$ has a profit $p_{i,j}$ if assigned to bin $j\in [m]$. Each bin $j\in [m]$ is associated with a collection of feasible assignments $\mathcal{F}_j\subseteq 2^{I}$. The feasible assignments are hereditary, that is, if $S\in \mathcal{F}_j$ and $T\subseteq S$ then $T\in \mathcal{F}_j$ for all $T\subseteq S$.  The feasible assignments are given implicitly via a $\beta$-optimization oracle which, given a value function $v:I\rightarrow \mathbb{R}_{\geq 0}$ and $j\in [m]$, finds a $\beta$-approximate solution for $\max_{S\in \mathcal{F}_j} \sum_{i\in S} v(i)$. 
A solution for the SAP instance is a tuple of disjoint sets $S_1,\ldots,S_m\subseteq I$  such that $S_j\in \mathcal{F}_j$ for all~$j\in [m]$. The objective is to find a solution  $S_1,\ldots, S_m$ of maximum profit $\sum_{j=1}^m \sum_{i\in S_j} p_{i,j}$. A $(1-e^{-1})\cdot \beta$-approximation for SAP was given in \cite{fleischer2011tight}.

Observe that $2$VMK can be cast as SAP by setting $p_{i,j}=p(i)$ for every $(i,j)\in I\times [m]$,  and $\mathcal{F}_j = \{ S\subseteq I~|~w(S)\leq (1,1)\}$ for every $j\in [m]$. 
That is, the profit of item $i$ is $p(i)$ regardless of the bin to which it is assigned, and the feasible assignments of all bins are simply all subsets of items which fit into a bin. 
For every fixed $\eps>0$, a $(1-\eps)$-optimization oracle for the bins can be implemented in polynomial time  using the PTAS of \cite{FC84} for {\sc $2$-dimensional knapsack}.
Hence, a $(1-e^{-1}-\eps)$-approximation for $2$VMK follows from the result of \cite{fleischer2011tight}.

Parameterized algorithms for $2$VMK  were proposed in  \cite{lassota2022tight,bannach2020solving}. We are not aware of earlier works which 
directly study $2$VMK from approximation algorithms viewpoint.

\subsection{Technical Overview}
\label{sec:techniques}
Our algorithm combines the approximation algorithm of Fleischer et al.\ \cite{fleischer2011tight} with a simple reduction of $2$VMK to ($1$-dimensional) 
{\sc Multiple Knapsack}. 

The algorithm of~\cite{fleischer2011tight} is based on randomized rounding of a {\em configuration-LP} solution. Given an instance $\mathcal{I}=(I, w, p, m)$ of 2VMK, let $w_j(i)$ denote the weight of item $i$ in the $j$th coordinate, for all  $i\in I$ and $j \in \{ 1,2 \}$. Also, given $f:I\to \mathbb{R}^d$ we use the notation $f(S) = \sum_{i\in S} f(i)$ for all $S\subseteq I$. A {\em configuration} of the instance $\cI$ is $C\subseteq I$ such that $w(C)\leq (1,1)$.  Let $\mathcal{C}(\mathcal{I})$ be the set of all configurations of the 2VMK instance $\mathcal{I}$. For any $i\in I$, let $\mathcal{C}(\mathcal{I},i)$ be the set of configurations containing item $i$. We often omit $\cI$ and use $\cC$ and $\cC(i)$ when the instance $\cI$ is known by context. Observe that a solution for the instance $\cI$ is a tuple of $m$ configurations. 

Given a 2VMK instance $\mathcal{I}=(I,w,p,m)$, let $\bx_C \in \{ 0,1 \}$ be an indicator for the selection of configuration $C \in \cC= \cC (\cI)$ for the solution. In the configuration-LP relaxation of the problem, we have $\bx_C \geq 0$ $\forall C \in \cC$. Our algorithm initially solves the following.

\begin{alignat}{4}
	(\textnormal{C-LP})~~~~~~~ &\max \quad ~~~~~~~~~ \sum_{C \in \mathcal{C}} \sum_{i\in C} \bx_C\cdot p(i)  &&        \nonumber                                                \\
	&\textnormal{subject to}: \quad \sum_{C \in \mathcal{C}} \bx_C \leq m \\
	&~~~~~~~~~~~~~~\quad \sum_{C \in \mathcal{C}(i)}  \bx_C \leq 1 &~~  \forall i\in I~~&\\
	& ~~~~~~~~~~~~~~~~~~~\bx_C\geq 0 & ~~~~~~~~~~~~~~  \forall C \in \mathcal{C}&
	\nonumber
	\label{eq:IntProg}
\end{alignat} 
The next lemma follows from a result of~\cite{fleischer2011tight}.
\begin{lemma}
	\label{lem:ConPTAS}
	There is a PTAS for \textnormal{C-LP}.
\end{lemma}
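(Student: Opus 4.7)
The plan is to apply the standard ellipsoid-with-approximate-separation technique to C-LP. The LP dual of C-LP has variables $\alpha\geq 0$ (for the bin-count constraint) and $\beta_i\geq 0$ for each $i\in I$, and reads
\begin{align*}
\min \quad & m\alpha + \sum_{i\in I}\beta_i\\
\text{s.t.}\quad & \alpha + \sum_{i\in C}\beta_i \geq \sum_{i\in C} p(i) \qquad \forall\, C\in\mathcal{C},\\
& \alpha,\beta_i\geq 0.
\end{align*}
It has polynomially many variables but exponentially many constraints; finding the most violated constraint for a candidate $(\alpha,\beta)$ amounts to maximizing $\sum_{i\in C}(p(i)-\beta_i)$ over $C\in\mathcal{C}$. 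After discarding items with $p(i)<\beta_i$ (which can only contribute negatively), this is exactly a $2$-dimensional knapsack instance with modified profits $p(i)-\beta_i$ and the original weights $w(i)$.

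Next, I would plug the Frieze--Clarke PTAS~\cite{FC84} for $2$-dimensional knapsack into the ellipsoid method, giving, for every $\eps'>0$, an approximate separation oracle that either returns a $(1-\eps')$-approximate most violated constraint or certifies near-feasibility of $(\alpha,\beta)$. Running ellipsoid with this oracle produces two objects: a dual solution of value at most $(1+O(\eps'))\cdot \OPT_{\text{dual}}$, and the polynomial-size set $\mathcal{C}'\subseteq\mathcal{C}$ of configurations ever returned by the oracle. Restricting the primal C-LP to variables $\{\bx_C : C\in\mathcal{C}'\}$ yields an LP of polynomial size that can be solved exactly. By LP duality together with the oracle's guarantee, the optimum of this restricted primal is at least $(1-O(\eps'))\cdot \OPT_{\text{C-LP}}$, and setting $\eps'=\Theta(\eps)$ gives the lemma.

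The main subtlety is the interaction between the ellipsoid method and an approximate separation oracle: one must verify that the dual solution produced, after a uniform scale-up by $1/(1-\eps')$, is feasible for the original dual over \emph{all} configurations $C\in\mathcal{C}$, not merely over $\mathcal{C}'$, so that weak duality transfers the approximation guarantee to the restricted primal. This is precisely what is handled in full generality by the framework of Fleischer et al.~\cite{fleischer2011tight} for SAP: given any $\beta$-approximate single-bin oracle, one obtains a $(1-\eps')\beta$-approximate solution to the configuration LP in polynomial time. Instantiating their statement for the $2$VMK configuration LP with the Frieze--Clarke oracle ($\beta=1-\eps'$) and tuning $\eps'$ yields the claimed PTAS for C-LP.
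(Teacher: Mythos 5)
Your proposal is correct and follows exactly the route the paper takes: the paper simply cites Fleischer et al.\ \cite{fleischer2011tight} for this lemma, and your reconstruction of the underlying mechanism (LP duality, ellipsoid with an approximate separation oracle given by the Frieze--Clarke PTAS for $2$-dimensional knapsack, scaling the dual solution, and solving the restricted primal over the encountered configurations) is precisely the argument encapsulated in that citation. Your derivation of the dual and identification of the separation problem as a $2$-dimensional knapsack instance are both accurate.
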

Let $\mathbf{x}$ be a solution for C-LP. We say that a random configuration $R\in \mathcal{C}$ is {\em distributed by $\mathbf {x}$} if $\Pr[R=C] = \frac{\bx_C}{m}$.\footnote{W.l.o.g we assume that $\|x\|=\sum_{C\in \mathcal{C}} \bx_C=m$.}

To obtain a $\left(\left(1-e^{-1}\right)\cdot(1-\eps)\right)$-approximation for $2$VMK, the algorithm of \cite{fleischer2011tight} finds a $(1-\eps)$-approximate solution $\bx$  for C-LP, and then independently samples $m$ configuration $R_1,\ldots, R_m\in \cC$, where each of the configurations $R_j$ is distributed by $\bx$.  The returned solution is the sampled configurations $R_1,\ldots, R_m$.  For simplicity of this informal overview,   assume that $\sum_{C\in \cC} \sum_{i\in C} \bx_C\cdot p(i) \approx  \OPT(\cI)$.

It can be shown that  \begin{equation}
	\label{eq:expected_j}\E\left[p(R_1\cup\ldots \cup R_j) \right] \approx  \left(1-e^{-\frac{j}{m}}\right) \cdot \sum_{C\in \cC} \sum_{i\in C} \bx_C\cdot p(i)  \approx \left(1-e^{-\frac{j}{m}}\right) \cdot \OPT(\cI)
	\end{equation}
 for all $j\in [m]$, and in particular  $\E\left[p(R_1\cup\ldots \cup R_m) \right] \approx \left(1-e^{-1}\right)\cdot \OPT(\cI)$. 
Define 
$$q_j=\E\left[p(R_1\cup\ldots \cup R_j)- p(R_1\cup\ldots \cup R_{j-1}) \right]$$
 to be  the marginal expected profit of the $j$th sampled configuration. By \eqref{eq:expected_j} we have 
\begin{equation}
	\label{eq:marginal}
q_j\approx \left( \left(1-e^{-\frac{j}{m}}\right)-\left(1-e^{-\frac{j-1}{m}}\right)\right) \cdot \OPT(\cI) \approx \frac{1}{m}\cdot e^{-\frac{j}{m} }\cdot  \OPT(\cI),
\end{equation}
where the last estimation follows from $e^{-\frac{j-1}{m}} \approx e^{-\frac{j}{m}} +\frac{1}{m }\cdot  e^{-\frac{j}{m}}$ by a Taylor expansion of $e^{x}$ at $x=-\frac{j}{m}$.  Equation \eqref{eq:marginal} implies that the marginal profit from each configuration decreases as the sampling process proceeds. The first  sampled configuration has a marginal profit of $\approx \frac{1}{m}\cdot \OPT(\cI)$, while the marginal profit of the  $m$th configuration is $\approx e^{-1} \cdot\frac{1}{m} \cdot \OPT(\cI) \approx 0.367 \cdot\frac{1}{m}\cdot \OPT(\cI)$.

We note that a simple reduction to {\sc Multiple Knapsack} can be used to derive a $\left(\frac{1}{2 } -\eps\right)$-approximation for $2$VMK.
An instance $\cJ$ of uniform {\sc multiple knapsack} (MK) is a tuple $\cJ = (I,w,p,m)$, where $I$ is a set of items, $w:I\rightarrow [0,1]$ is a weight function, $p:I\rightarrow \mathbb{R}_{\geq 0}$ is a profit function, and $m$ is the number of (unit size) bins. A {\em configuration} of $\cJ$ is a subset of items $C \subseteq I$ satisfying $w(C) \leq 1$. We use $\cC(\cJ)$ to denote the set of all configurations of $\cJ$, and sometimes omit $\cJ$ from this notation if it is known by context. A feasible solution  for $\cJ$ is a tuple of $m$ configurations $C_1, \ldots , C_m\in \cC(\cJ)$. The objective is to find a solution $C_1, \ldots C_m$ for which the total profit, given by $p(\bigcup_{t \in [m]} C_t)$, is maximized. We note that uniform MK can be viewed as {\sc uniform $1$-dimensional vector multiple knapsack}. 

Let $\cI=(I,w,p,m)$ be a $2$VMK instance, and define an MK instance $\cJ = (I,w',p,m)$ where $w'(i) = \max\left\{w_1(i),w_2(i)\right\}$ for every $i\in I$. It can be easily shown that $\cC(\cJ)\subseteq \cC(\cI)$, i.e., a configuration of the MK instance is a configuration of the $2$VMK instance. Furthermore, every $C\in \cC(\cI)$ can be partitioned into $C_1,C_2\in \cC(\cJ)$ (possibly with $C_1=\emptyset$). That is, every configuration of the $2$VMK instance can be split into two configurations of the MK instance. Therefore $\OPT(\cJ)\geq \frac{1}{2}\cdot \OPT(\cI)$,  as we can take the optimal solution of the $2$VMK instance~$\cI$, convert its configurations to $2m$ configurations of the MK instance $\cJ$, and select the $m$ most profitable ones. As MK admits a PTAS~\cite{chekuri2005polynomial}, we can find a $(1-\eps)$-approximate solution for $\cJ$ in polynomial time. This leads to the following algorithm, to which we refer as the {\em reduction algorithm}. Given the instance $\cI$, return a $(1-\eps)$-approximate solution for $\cJ$. The above arguments can be used to show that the reduction algorithm is a  $\left(\frac{1}{2}-\eps\right)$-approximation algorithm for $2$VMK.

While this simple $\left(\frac{1}{2}-\eps\right)$-approximation is inferior to the $\left(1-e^{-1}- \eps\right)$-approximation of Fleischer et al. \cite{fleischer2011tight}, it is still useful for obtaining a better approximation for 2VMK, due to the following property.
 Let $T\subseteq I$ be a random subset of the items such that  $\Pr(i\in T)\leq \alpha $ for all $i\in I$, and $T$ satisfies some concentration bounds. 
 Then, if the reduction  algorithm is executed with the $2$VMK instance  $(I\setminus T, p,w, (1-\alpha) m)$, it returns with high probability a solution of profit at least $\frac{1}{2} \cdot (1-\alpha) \cdot \OPT(\cI)$. 
In other words, if every item is removed from the instance with probability at most $\alpha$, then the algorithm can find a packing into $(1-\alpha)\cdot m$ bins that yields $\frac{1}{2} \cdot (1-\alpha)$ of the  original profit $\OPT(\cI)$.
This property resembles the notion of {\em subset oblivious} algorithms used by \cite{BNA} as part of their Round$\&$Approx framework for set covering problems. Indeed, we use a subset oblivious algorithm of~\cite{BNA} in our proof of this property.

This suggests the following hybrid algorithm. First solve C-LP and sample $\ell\approx\alpha m$ configuration $R_1,\ldots, R_{\ell}$ distributed  by the C-LP solution $\bx$. Subsequently, define $T=\bigcup_{j=1}^{\ell } R_j$, and use the reduction algorithm to find a solution $S$  for the instance $(I\setminus T, w,p, (1-\alpha)m)$. The algorithm returns $R_1,\ldots, R_{\ell}$ together with  the $m-\ell =(1-\alpha)m$ configurations of $S$. It can be shown that $\Pr(i\in T)\leq \alpha$ for all $T$, and that $T$ satisfies the required concentration bounds; therefore, the expected profit of $S$ is  $\approx \frac{1}{2} \cdot(1-\alpha)\cdot\OPT(\cI)$. Since $S$ uses $(1-\alpha)m$ configurations, this can be interpreted as residual profit of $\frac{1}{2\cdot m}\cdot\OPT(\cI)$ per configuration. This property suggests how to select $\alpha$. The value of $\alpha$ is selected such that the marginal profit of the $\alpha\cdot m$ sampled configurations is equal to $\frac{1}{2m}\cdot \OPT(\cI)$, the marginal profit of a configuration in $S$. By \eqref{eq:marginal} we get that $\alpha =\ln 2$.

We note that this hybrid approach is similar to the Round$\&$Approx framework of Bansal et al.~\cite{BNA}, which solves set covering problems by sampling random configurations based on a solution for a configuration-LP,  followed by a subset oblivious algorithm which completes the solution. 
Our algorithm can be viewed as an adaptation of the framework of~\cite{BNA} to
knapsack variants.
To the best of our knowledge, this is the first application of ideas from \cite{BNA} to maximization problems.

Our proofs rely on a dimension-free concentration bound for self-bounding functions due to Boucheron et al.~\cite{BCS09} (see \Cref{sec:concentration} for details).  While our approach is conceptually similar to the Round$\&$Approx of \cite{BNA}, which uses McDiarmid's bound~\cite{mcdiarmid1989method}  to show concentration, it appears that McDiarmid's bound does not suffice to guarantee concentration of the profits.
The bound of Boucheron et al. was previously  used by Vondr{\'a}k to show concentration bounds for submodular functions \cite{vondrak2010note}.  We are not aware of other applications of this bound in the context of combinatorial optimization.

\subsection{Organization}
	\label{sec:organization}

In Section~\ref{sec:preliminaries} we give some definitions and preliminary results. Section~\ref{sec:epsNice} presents our approximation algorithm for 2VMK. We conclude in  Section~\ref{sec:conclusion}. Due to space constraints some of the results are given in the Appendix.

\section{Preliminaries}
\label{sec:preliminaries}
For any function $f:I\to \mathbb{R}^d$ where $d\in \{1,2\}$ we use the notation $f(A)=\sum_{i\in A} f(i)$.
We note that in both MK and $2$VMK an item selected for the solution may appear in more than one configuration; however, the profit of each selected item is counted exactly once.

Let $\cI$ be an instance of either $2$VMK or MK. 
We say that  a subset of items $S\subseteq I$ {\em can be packed into $q$ bins of $\cI$}, for some $1 \leq q \leq m$, if there are configurations $C_1, \ldots , C_q \in \mathcal{C}(\mathcal{I})$ such that $\cup_{t=1}^q C_t=S$.

Finally, given $n\in\mathbb{N}$ and an arbitrary set $\mathcal{X}$, define $\mathcal{X}^{n}$ to be the set of all vectors of dimension $n$ over $\mathcal{X}$; that is, $\mathcal{X}^{n}=\{(x_1,\ldots,x_n)~|~\forall t\in \{1,\ldots,n\}: ~x_t\in \mathcal{X}\}$.

\subsection{Associated Instances}

Our approximation algorithm for 2VMK uses as a subroutine an approximation algorithm for MK. To this end, we define for a given 2VMK instance an associated MK instance. Formally, given a 2VMK instance $\mathcal{I}=(I, w, p, m)$, we define the {\em $k$-associated MK instance} $\mathcal{I}'=(I',w',p',m')$  as follows. The set of items is $I'=I$. The item weights are given by $w'(i)=\max\{w_1(i),w_2(i)\}$, for $1 \leq i \leq n$; the profit of item $i$ is $p'(i)=p(i)$, and the number of (unit size) bins is $m' = k\cdot m$. The next result will be useful in analyzing our algorithm for 2VMK.

\begin{lemma}
	\label{lem:solutionTranslator}
	Let $\mathcal{I}=(I,w,p,m)$ be a 2VMK instance, and $\mathcal{I'}=(I',w',p',m')$ its $k$-associated instance. Then, any set $S\subseteq I$ that can be packed in $q$ bins of $\mathcal{I}$, can be packed in $2\cdot q$ bins of $\mathcal{I}'$. 
\end{lemma}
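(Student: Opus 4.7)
The plan is to prove the lemma by a direct constructive partition: take each 2VMK configuration that packs $S$ into $\mathcal{I}$ and split it into at most two MK configurations of $\mathcal{I}'$, yielding $2q$ configurations in total.

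By assumption there exist configurations $C_1,\ldots,C_q\in\mathcal{C}(\mathcal{I})$ with $\bigcup_{t=1}^{q}C_t=S$, where each $C_t$ satisfies $w_1(C_t)\le 1$ and $w_2(C_t)\le 1$. For each $t\in[q]$, I would partition $C_t$ into two subsets according to which dimension dominates each item's weight:
\[
C_t^{(1)}=\{i\in C_t \mid w_1(i)\ge w_2(i)\},\qquad C_t^{(2)}=\{i\in C_t \mid w_1(i) < w_2(i)\}.
\]
By construction, for every $i\in C_t^{(1)}$ we have $w'(i)=\max\{w_1(i),w_2(i)\}=w_1(i)$, hence
\[
w'(C_t^{(1)})=\sum_{i\in C_t^{(1)}} w_1(i) \le w_1(C_t) \le 1,
\]
so $C_t^{(1)}\in\mathcal{C}(\mathcal{I}')$. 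Symmetrically, $w'(C_t^{(2)})\le w_2(C_t)\le 1$, so $C_t^{(2)}\in\mathcal{C}(\mathcal{I}')$.

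Since $C_t^{(1)}\cup C_t^{(2)}=C_t$ for every $t$, the collection $\{C_t^{(1)},C_t^{(2)}\}_{t\in[q]}$ consists of at most $2q$ MK configurations of $\mathcal{I}'$ whose union is $\bigcup_{t=1}^{q}C_t=S$, which is precisely the statement that $S$ can be packed into $2q$ bins of $\mathcal{I}'$. (If either $C_t^{(1)}$ or $C_t^{(2)}$ is empty, we can pad with empty configurations, which are trivially feasible.) There is no real obstacle here; the only thing to be careful about is the tie-breaking in the partition (assigning items with $w_1(i)=w_2(i)$ to exactly one side), which the strict inequality in the definition of $C_t^{(2)}$ handles.
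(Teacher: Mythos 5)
Your proof is correct and follows essentially the same approach as the paper: partition each 2VMK configuration $C_t$ into items dominated by the first coordinate and items dominated by the second, then observe each part's $w'$-weight is bounded by the corresponding $w_j(C_t)\le 1$. The paper's proof is identical in substance (it calls the two parts $B_t$ and $C_t$), so no comparison is needed.
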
	
We give the proof in \Cref{sec:ommited}.

\subsection{Restriction to $\eps$-Nice Instances}

To simplify the presentation, we first restrict our attention to the family of $\eps$-nice 2VMK instances. Let $\exp^{[k]}(x)=\exp(\exp^{[k-1]}(x))$, for any integer $k\geq 2$, and $\exp^{[1]}(x)=\exp(x)$.

\begin{definition}
	Given $\eps\in (0,0.01)$, an instance $\mathcal{I}=(I,w,p,m)$ is {\em $\eps$-nice} if $m\geq \exp^{[3]}(\eps^{-30})$ and $p(C)\leq \eps^{20}\cdot \OPT(\mathcal{I})$ for every $C\in \mathcal{C}$.\footnote{We did not attempt to optimize the constants.}
\end{definition}

We show that efficient approximation for 2VMK on $\eps$-nice instances yields almost the same approximation ratio for general instances.
\begin{lemma}
	\label{lem:reductionToNice}
	For any $\eps\in (0,0.01)$ and $\beta\in (0,1-\eps)$, if there is polynomial-time $\beta$-approximation algorithm for \textnormal{2VMK} on $\eps$-nice instances, then there is a polynomial-time $(1-\eps)\cdot\beta$-approximation algorithm for \textnormal{2VMK}.
\end{lemma}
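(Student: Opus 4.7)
The plan is to reduce any 2VMK instance to an $\eps$-nice sub-instance, losing at most a $(1-\eps)$ factor in profit, and then invoke the hypothesized $\beta$-approximation. The two requirements defining niceness --- many bins ($m \geq M:=\exp^{[3]}(\eps^{-30})$) and bounded per-configuration profit ($p(C)\leq \eps^{20}\OPT$) --- will be enforced in turn.

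First I would dispose of the case $m<M$. Here the number of bins is a constant depending only on $\eps$, and I would appeal to a stand-alone polynomial-time $(1-\eps)$-approximation for 2VMK with $O_\eps(1)$ bins, assembled in a routine way: guess (within factor $1\pm\eps$) the profit of each item of non-negligible profit, enumerate over the $O_\eps(1)$-size placement of these items into the $M$ bins, and complete each bin via the Frieze--Clarke PTAS~\cite{FC84} for single-bin 2D knapsack. The hypothesized $\beta$-approximation is not invoked in this regime.

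In the main regime $m\geq M$, I would first guess $V\in[(1-\eps)\OPT(\cI),\,\OPT(\cI)]$ up to a factor $1+\eps$, by enumerating polynomially many candidates. Then peel configurations iteratively: repeatedly run the Frieze--Clarke PTAS on the current item set to obtain a $(1-\eps)$-approximate maximum-profit configuration, and while its profit exceeds $\eps^{21}V$, commit it to a fresh bin and delete its items. The peeled configurations are disjoint and each has profit greater than $\eps^{21}V$, so at most $k=O(\eps^{-21})$ iterations occur. Let $P_{\textrm{peel}}$ denote the total peeled profit and $\cI'$ the residual instance, which has $m-k$ bins and every remaining configuration of profit at most $\eps^{21}V/(1-\eps)\leq 2\eps^{21}\OPT(\cI)$. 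Finally, apply the hypothesized $\beta$-approximation to $\cI'$ and output its solution together with the peeled configurations.

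The main obstacle will be the bookkeeping that turns this into a clean $(1-\eps)\beta$-approximation. I would verify: \emph{(i) Niceness of $\cI'$:} the bin count $m-k\geq M$ after a mild enlargement of $M$ (absorbed because $\exp^{[3]}(\eps^{-30})$ dwarfs $\eps^{-21}$), and the configuration-profit bound $2\eps^{21}\OPT(\cI)\leq \eps^{20}\OPT(\cI')$ provided $\OPT(\cI')\geq 2\eps\cdot\OPT(\cI)$; if this fails, the peeled solution alone gives profit at least $(1-3\eps)\OPT(\cI)\geq(1-\eps)\beta\,\OPT(\cI)$ and we output it directly. \emph{(ii) Profit accounting:} restricting an optimal solution of $\cI$ to the items surviving the peel and discarding the $k$ least profitable resulting sub-configurations yields
\begin{equation*}
\OPT(\cI')\;\geq\;(1-k/m)\bigl(\OPT(\cI)-P_{\textrm{peel}}\bigr)\;\geq\;(1-\eps)\bigl(\OPT(\cI)-P_{\textrm{peel}}\bigr),
\end{equation*}
since $k/m=O(\eps^{-21})/M\ll\eps$. \emph{(iii) Final ratio:} using $\beta\leq 1$,
\begin{equation*}
\beta\cdot\OPT(\cI')+P_{\textrm{peel}}\;\geq\;\beta\bigl(\OPT(\cI')+P_{\textrm{peel}}\bigr)\;\geq\;\beta(1-\eps)\OPT(\cI).
\end{equation*}
Absorbing the various slack constants by replacing $\eps$ with $\eps/c$ for an absolute constant $c$ yields the claimed $(1-\eps)\beta$-approximation.
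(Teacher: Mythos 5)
Your proof takes a genuinely different route from the paper in both sub-cases. For the constant-$m$ regime, you sketch enumeration of high-profit item placements followed by per-bin completion with the Frieze--Clarke PTAS; the paper instead reduces constant-$m$ 2VMK to $2m$-dimensional Multiple-Choice Knapsack (create a copy $(i,r)$ of item $i$ for each bin index $r$, placing its weight in coordinates $2r-1,2r$, and enforce at most one copy per item), which admits a PTAS for fixed dimension by~\cite{MLA03}. The paper's reduction is airtight, whereas your sketch leaves the crucial step underspecified: filling the bins sequentially with a single-bin PTAS, removing used items before moving to the next bin, is a greedy procedure that does not obviously give a $(1-\eps)$ factor once the residual bin capacities differ (sequential knapsack filling generically hits a $1-1/e$--type barrier; e.g.\ a $(1,1)$-capacity bin followed by a $(0.5,0.5)$-capacity bin, with items of weight $(0.5,0.5)$ and $(1,1)$, already shows the greedy bin choice can block a profitable placement). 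Some further argument --- rounding the small items' weight vectors into $O_\eps(1)$ classes and solving a small integer program over per-bin counts, or adopting the paper's MCK reduction --- is needed to close that gap.

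For the large-$m$ regime, your peeling approach --- repeatedly extract a near-maximum-profit configuration via the single-bin PTAS while its profit exceeds a threshold $\approx\eps^{21}V$, then hand the residual to the hypothesized algorithm --- is a valid and genuinely different alternative to the paper's method, which sorts items by the density $p(i)/(w_1(i)+w_2(i))$, removes a density-largest prefix $S$ until $w'(S)\geq\eps^{-40}$, packs $S$ via First-Fit in $O(\eps^{-50})$ bins, and invokes the hypothesized algorithm on the residual. Both arguments establish the required bound $p(C)\leq\eps^{20}\OPT(\cI')$ on residual configurations and lose only $O(\eps)$ in profit; yours targets the per-configuration profit bound directly, while the paper derives it from a uniform density cap on residual items. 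Your accounting of bin counts and of the ``bad case'' where $P_{\textrm{peel}}$ alone suffices is correct in spirit but needs a tighter threshold: with $\eps^{21}V$ the bad-case bound $P_{\textrm{peel}}>(1-3\eps)\OPT$ does \emph{not} dominate $(1-\eps)\beta\OPT$ when $\beta$ is close to $1-\eps$ (since $(1-3\eps)<(1-\eps)^2$). Lowering the threshold to $\eps^{22}V$ fixes this, giving $P_{\textrm{peel}}>(1-O(\eps^2))\OPT\geq(1-\eps)\beta\OPT$ for all $\beta<1-\eps$; this is the kind of slack you flag as absorbable, but the fix is an exponent change rather than a uniform $\eps\mapsto\eps/c$ substitution.
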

The proof of \Cref{lem:reductionToNice} is given in \Cref{sec:reduceEpsNice}.
\subsection{Two-dimensional Vector Bin Packing}

An instance $\mathcal{I}$ of the {\sc $2$-Dimensional Vector Bin Packing} (2VBP) problem is a pair $(I,w)$, where $I$ is a set of $n$ items and $w~:~I~\rightarrow~[0,1]^2$ is a two-dimensional weight function. A {\em solution} for the instance $(I,w)$ is a collection of subsets of items $S_1,\ldots,S_m\subseteq I$ such that $w(S_b)=\sum_{i\in S_b}w(i)\leq (1,1)$ for all $b=1,\ldots,m$, and $\bigcup_{b=1}^m S_b=I$. The {\em size} of the solution is $m$. Our objective is to find a solution of minimum size.
 
An {\em asymptotic polynomial-time approximation scheme} (APTAS) is an infinite family $\{A_{\eps} \}$ of asymptotic $(1-\eps)$-approximation algorithms, one for each $\eps >0$. Ray~\cite{ray2021there} showed that 2VBP does not admit an asymptotic approximation ratio better than $\frac{600}{599}$, assuming $P\neq NP$; thus, 2VBP does not admit an APTAS.

\section{Approximation Algorithm for $\eps$-Nice Instances}
\label{sec:epsNice}
In this section we present an algorithm for $\eps$-nice 2VMK instances. Our algorithm proceeds by initially  obtaining an approximate solution $\mathbf{x}$ for C-LP (as given in Section~\ref{sec:techniques}), and then forming a partial solution by sampling $1 \leq \ell \leq m$ configurations. The remaining $(m-\ell)$ configurations are derived by solving the associated MK instance for the remaining (unassigned) items.
The pseudocode of our algorithm  is given in \Cref{alg:MainAlgorithm}.

\begin{algorithm}[!h]
	\caption{Approximation Algorithm for $\eps$-nice instances}
	\label{alg:MainAlgorithm}
	\SetKwInOut{Configuration}{configuration}
	\SetKwInOut{Input}{input}
	\SetKwInOut{Output}{output}
	
	\Configuration{$\eps\in(0,0.01). $}
	\Input{An {\em $\eps$-nice} instance $\mathcal{I}=(I,w,p,m)$ of 2VMK.}
	
	\Output{A solution for the instance $\mathcal{I}$.}
	\begin{algorithmic}[1]
		\STATE \label{MainAlg:PTASConfig} Find a $(1-\eps)$-approximate solution $\mathbf {x}$ for C-LP; let $x^*$ be its value.
		\STATE \label{MainAlg:ConfigChoosing}
		\For {$t=1$ to $\ell=\ceil{m\cdot \ln{2}}$} 
		{Sample a random configuration $R_t$  distributed by $\mathbf{x}$}
		\STATE $S \leftarrow I\setminus(\cup_{t\in \{1,\ldots,l\}}R_t)$
		\STATE \label{MainAlg:MKP}Let $\mathcal{I'}$ be the 1-associated MK instance of the 2VMK instance $(S,w,p,m-\ell)$.
		\STATE \label{MainAlg:Res}Find a $(1-\eps)$-approximate solution for the MK instance $\mathcal{I'}$; denote the solution by $R_{\ell+1},\ldots,R_m$.
		\STATE Return $(R_1,\ldots,R_m)$.
	\end{algorithmic}
\end{algorithm}

Note that, by Lemma~\ref{lem:ConPTAS}, Step \ref{MainAlg:PTASConfig} of \Cref{alg:MainAlgorithm} can be implemented in polynomial time, for any fixed $\eps>0$. Let  $\eps\in (0,0.01)$. and $\mathcal{I}$ be an $\eps$-nice $2$-VMK instance.  Consider the execution of \Cref{alg:MainAlgorithm} configured by $\eps$ with $\mathcal{I}$ as its input.   
Let $\OPT$ be the set of items selected by an optimal solution for $\mathcal{I}$,  and $T=\bigcup_{t\in\{1,\ldots,\ell\}}R_t$ the items selected in Step \ref{MainAlg:ConfigChoosing} in \Cref{alg:MainAlgorithm}.
We use the next lemmas in the analysis of the algorithm.
\Cref{lem:firstHalfProfit} lower bounds the expected profit of $R_{1},\ldots,R_{\ell}$, the configurations sampled in Step \ref{MainAlg:ConfigChoosing} of \Cref{alg:MainAlgorithm}.  \Cref{lem:secondHalfProfit} gives a lower bound on the profit of the MK solution $R_{\ell+1},\ldots,R_{m}$ found in Step \ref{MainAlg:Res} of \Cref{alg:MainAlgorithm}. \Cref{lem:conclusionProfit} lower bounds the profit of the solution returned by the algorithm using the bounds in Lemmas~\ref{lem:firstHalfProfit} and \ref{lem:secondHalfProfit}, whose proofs are given in Sections~\ref{sssec:SampledConfig} and~\ref{sssec:ResidualConfig}.
\begin{lemma}
	\label{lem:firstHalfProfit}
	$\Pr\left[p(T)\leq (1-e^{-\alpha}-2\cdot\eps)\cdot p(\OPT) \right] \leq \exp\left(-\eps^{-7}\right)$.
\end{lemma}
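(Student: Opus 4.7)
The plan is to prove \Cref{lem:firstHalfProfit} in two stages: first I bound $\E[p(T)]$ from below by $(1-e^{-\alpha}-\eps)\,p(\OPT)$ using the LP value and a concavity argument, then I upgrade this expectation bound to a high-probability bound via a concentration inequality for self-bounded functions.

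For the expectation, independence of $R_1,\ldots,R_\ell$ together with $\Pr[i\in R_t] = \pi_i/m$, where $\pi_i := \sum_{C\in \cC(i)} \bx_C \leq 1$, gives
\[
\E[p(T)] \;=\; \sum_{i\in I} p(i)\cdot \bigl(1-(1-\pi_i/m)^\ell\bigr).
\]
The map $g(\pi) := 1-(1-\pi/m)^\ell$ is concave on $[0,1]$ and vanishes at $\pi=0$, hence dominates its chord, giving $g(\pi_i) \geq g(1)\cdot \pi_i$. Summing over $i$ yields $\E[p(T)] \geq g(1) \cdot \sum_i \pi_i p(i) = g(1)\cdot x^*$. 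Combining $g(1) = 1-(1-1/m)^\ell \geq 1-e^{-\ell/m} \geq 1-e^{-\alpha}$ (which holds for $\alpha = \ln 2$ by the choice $\ell = \lceil m\ln 2\rceil$), the $\eps$-nice hypothesis that $m$ is enormous (so any rounding gap is absorbed), and $x^* \geq (1-\eps)\,p(\OPT)$ from \Cref{lem:ConPTAS}, I obtain $\E[p(T)] \geq (1 - e^{-\alpha} - \eps)\,p(\OPT)$.

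For the concentration, view $F := p(T)/(\eps^{20}\,p(\OPT))$ as a function of the independent coordinates $R_1,\ldots,R_\ell$, and let $T^{(t)} := \bigcup_{s\neq t} R_s$ with $F^{(t)} := p(T^{(t)})/(\eps^{20}\,p(\OPT))$. The $\eps$-nice bound $p(C) \leq \eps^{20}\,p(\OPT)$ for every configuration gives $F - F^{(t)} = p(R_t\setminus T^{(t)})/(\eps^{20}\,p(\OPT)) \in [0,1]$; moreover, an item of $T$ contributes to $\sum_t p(R_t\setminus T^{(t)})$ only via the unique configuration containing it (if any), so $\sum_{t=1}^\ell (F - F^{(t)}) \leq F$. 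These are exactly the self-bounding axioms of Boucheron et al.~\cite{BCS09}, whose lower-tail inequality $\Pr[F \leq \E F - s] \leq \exp(-s^2/(2\,\E F))$ then applies. Choosing $s = \eps^{-19}$, so that $s\cdot\eps^{20}\,p(\OPT) = \eps\cdot p(\OPT)$, and using the crude bound $\E F \leq \eps^{-20}$, the tail is at most $\exp(-\eps^{-18}/2) \leq \exp(-\eps^{-7})$ for $\eps \in (0,0.01)$, as required.

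The main obstacle is verifying the second self-bounding axiom $\sum_t(F-F^{(t)}) \leq F$. The first axiom follows immediately from $\eps$-niceness, but the second exploits the \emph{set-union} structure of $T$: each item's contribution to the sensitivity sum is non-zero only when it appears in exactly one sampled configuration, in which case its contribution equals its contribution to $F$. Once this is in place, matching the $\eps$ exponents is routine, given the substantial slack between $\eps^{-18}$ (from the concentration inequality) and $\eps^{-7}$ (the target).
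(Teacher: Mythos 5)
Your proof is correct and follows the same overall structure as the paper's: first lower-bound $\E[p(T)]$, then apply the Boucheron–Lugosi–Massart lower-tail inequality for self-bounding functions. The only genuine difference is in the expectation step: you use concavity of $\pi \mapsto 1-(1-\pi/m)^\ell$ on $[0,1]$ (together with $g(0)=0$) to get the chord bound $g(\pi_i)\geq g(1)\,\pi_i$ in one stroke, whereas the paper's \Cref{lem:profitT} first bounds $\Pr[i\notin T]\leq \exp(-\alpha\sum_{C\in\cC(i)}x_C)$ and then invokes the ad hoc inequality $1-e^{-a x}\geq x(1-e^{-\alpha})$ for $x,a\in[0,1]$; your concavity argument is somewhat cleaner and avoids that extra lemma. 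In the concentration step you normalize by $\eps^{20}p(\OPT)$ while the paper uses $\eps^{10}p(\OPT)$ — both are valid given the slack, yielding tail exponents $\eps^{-18}/2$ and $\eps^{-8}/2$ respectively, each comfortably $\geq \eps^{-7}$ — and you verify the self-bounding axioms directly rather than citing the paper's \Cref{lem:boundFunc}, but the content is identical.
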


\begin{lemma}
	\label{lem:secondHalfProfit}    $\Pr\left[p(\bigcup_{t=\ell+1}^{m} R_{t})\leq\left(\frac{1-\alpha}{2}-3\cdot\eps\right)\cdot p(\OPT)\right] \leq \frac{1}{4}$.
\end{lemma}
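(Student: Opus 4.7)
The plan is to prove the lemma in three stages: (1) lower-bound $\E[\textnormal{OPT}_{MK}(\mathcal{I}')]$, where $\mathcal{I}'$ is the $1$-associated MK instance of $(S,w,p,m-\ell)$; (2) lift this expectation bound to a probability bound using the self-bounding concentration inequality of Boucheron et al.~\cite{BCS09}; and (3) invoke the $(1-\eps)$-approximation guarantee of the MK PTAS applied in Step~\ref{MainAlg:Res}. Throughout I write $\alpha=\ell/m\approx\ln 2$.

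For stage (1), I would exhibit a feasible MK solution for $\mathcal{I}'$ constructed from the optimal $2$VMK solution $\OPT$. Since $\OPT\cap S\subseteq\OPT$ is packable in the $m$ 2D bins used by $\OPT$, \Cref{lem:solutionTranslator} converts this to a packing of $\OPT\cap S$ into $2m$ $1$D configurations of total profit $p(\OPT\cap S)$. From these $2m$ configurations, a fractional-knapsack-style argument extracts a feasible solution for $\mathcal{I}'$ of profit at least $p(\OPT\cap S)\cdot\min\bigl(1,(m-\ell)/w'(\OPT\cap S)\bigr)$, up to an additive $O(\eps\,p(\OPT))$ error from the integrality gap (negligible on $\eps$-nice instances since every configuration has profit $\le\eps^{20}\,p(\OPT)$). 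Because each $R_t$ is distributed by $\mathbf{x}$ we have $\Pr[i\in R_t]\le 1/m$, so a union bound gives $\Pr[i\in T]\le\alpha$ and hence $\E[p(\OPT\cap S)]\ge(1-\alpha)\,p(\OPT)$. Combining this with the LP constraint $\sum_i q_i\,w'(i)\le 2$ (where $q_i=\sum_{C\ni i}\mathbf{x}_C$) to upper bound $\E[w'(\OPT\cap S)]$, one obtains $\E[\textnormal{OPT}_{MK}(\mathcal{I}')]\ge\bigl(\tfrac{1-\alpha}{2}-O(\eps)\bigr)\,p(\OPT)$.

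For stage (2), view $\textnormal{OPT}_{MK}(\mathcal{I}')$ as a function $f(R_1,\dots,R_\ell)$ of the independent random configurations. Changing a single $R_t$ can alter $f$ by at most the total profit of one configuration, which on an $\eps$-nice instance is bounded by $\eps^{20}\,p(\OPT)$; this is precisely the small-Lipschitz property required by the self-bounding framework of \cite{BCS09}, and the resulting concentration inequality yields $|f-\E f|\le O(\eps\,p(\OPT))$ with probability at least $3/4$. Stage (3) is then immediate: the algorithm's MK solution has profit at least $(1-\eps)\textnormal{OPT}_{MK}(\mathcal{I}')$, so chaining the three stages gives $p(\bigcup_{t=\ell+1}^{m}R_t)\ge\bigl(\tfrac{1-\alpha}{2}-3\eps\bigr)\,p(\OPT)$ with probability at least $3/4$, which is exactly the lemma.

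The main obstacle is the sharpness of the expectation bound in stage (1): the naive "$\OPT\cap S$ packs in $2m$ $1$D bins, take the top $m-\ell$" approach yields only $\tfrac{(1-\alpha)^2}{2}\,p(\OPT)$, which is insufficient by a factor of roughly $\tfrac{1}{1-\alpha}$. Closing this gap requires the fractional-knapsack argument together with the LP weight bound $\sum_i q_i\,w'(i)\le 2$ and concentration of $w'(\OPT\cap S)$, so that the effective capacity-to-demand ratio $(m-\ell)/w'(\OPT\cap S)$ is at least $1/2$ in the relevant regime $\alpha=\ln 2$. The concentration step is also delicate because $\textnormal{OPT}_{MK}(\mathcal{I}')$ depends on the joint realization of $R_1,\dots,R_\ell$ in a way that is not naively Lipschitz --- which is exactly why the paper employs the self-bounding inequality of \cite{BCS09} rather than McDiarmid's, as noted in the technical overview.
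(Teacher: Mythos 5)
There are two genuine gaps in your plan, and the first is fatal to Stage~(1). You claim that the LP bound $\sum_i q_i\,w'(i)\le 2m$ (your ``$\le 2$'' should read ``$\le 2m$'') together with $\Pr[i\in T]\le\alpha$ yields an \emph{upper} bound $\E[w'(\OPT\cap S)]\lesssim 2(1-\alpha)m$, so that the capacity--to--demand ratio $(m-\ell)/w'(\OPT\cap S)$ is at least $\tfrac{1}{2}$. But $\Pr[i\in T]\le\alpha$ is equivalent to $\Pr[i\notin T]\ge 1-\alpha$, which goes the \emph{wrong} way: items of $\OPT$ with low C-LP coverage $q_i$ survive with probability close to $1$, and the constraint $\sum_i q_i\,w'(i)\le 2m$ is silent about such items. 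Indeed nothing ties the C-LP solution $\mathbf{x}$ to $\OPT$'s items at all, so the only unconditional bound is $\E[w'(\OPT\cap S)]\le w'(\OPT)\le 2m$, which returns you to the $\tfrac{(1-\alpha)^2}{2}$ factor you correctly identified as insufficient. The paper's resolution is structurally different: rather than packing $\OPT\cap T^c$ directly, it constructs a subset $Q\subseteq\OPT$ as the support of a basic optimal solution to an auxiliary LP (Q-LP) whose objective is $\E[p(Q\setminus T)]$ and whose $k$ constraints cap $\E[g_t(Q\setminus T)]$, where $g_1,\dots,g_k$ are the subset-oblivious functions of the associated {\sc Bin Packing} instance from~\cite{BNA}. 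The constant solution $y'_i=(1-\alpha')/\Pr[i\notin T]\le 1$ certifies that this LP achieves value $(1-\alpha')p(\OPT)$, and subset-obliviousness then yields $\BPOPT(Q\setminus T,w')\le (2(1-\alpha')+O(\eps^2))m$; it is precisely this machinery that buys the missing factor of $1/(1-\alpha)$.

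The second gap is in Stage~(2): being $\eps^{20}p(\OPT)$-Lipschitz in each coordinate is \emph{not} the self-bounding property of \Cref{def:self_bounding}. Lipschitzness is what McDiarmid's inequality needs, and the paper explicitly notes that McDiarmid does not suffice here (its bound degrades with the number $\ell\approx 0.7m$ of coordinates, while $m$ can be towers-of-exponentials large). The self-bounding bound of~\cite{BCS09} is dimension-free, but only for functions satisfying the two conditions of \Cref{def:self_bounding}; the paper's \Cref{lem:boundFunc} verifies these only for functions of the form $h(\cup_t C_t)/\eta$ with $h$ \emph{modular}. Your candidate $f=\textnormal{OPT}_{MK}(\mathcal{I}')$, being a maximization over packings, is neither modular nor a function of a union of the $R_t$'s in that form, so there is no given route to verifying $\sum_t\left(f-f_t\right)\le f$. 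The paper avoids this by applying \Cref{lem:selfBoundLem} only to the modular quantities $g_t(Q\cap T)$ and $\tilde p(T)$ (\Cref{clm:sizeBound,clm:profitBound}), and then deduces the bound on the MK optimum deterministically from the event that these quantities concentrate.
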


\begin{lemma}
	\label{lem:conclusionProfit}
	\Cref{alg:MainAlgorithm} returns a solution of profit at least $\left(1-\frac{\ln{2}}{2}-5\cdot\eps\right)\cdot p(\OPT)$ with probability at least $\frac{1}{2}$.
\end{lemma}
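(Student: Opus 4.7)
The plan is to combine Lemmas~\ref{lem:firstHalfProfit} and~\ref{lem:secondHalfProfit} via a union bound and simple arithmetic; essentially all of the real work has been front-loaded into those two lemmas, so this is a short bookkeeping proof. First I would fix $\alpha = \ln 2$, which is the value motivated in the technical overview (and matches the choice $\ell=\lceil m\ln 2\rceil$ in Step~\ref{MainAlg:ConfigChoosing}), and note that $1-e^{-\alpha}=\tfrac{1}{2}$ and $\tfrac{1-\alpha}{2}=\tfrac{1-\ln 2}{2}$.

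Next I would invoke the two lemmas to set up the good events. Let $E_1$ be the event $p(T)\geq\left(\tfrac{1}{2}-2\eps\right)\cdot p(\OPT)$; by \Cref{lem:firstHalfProfit}, $\Pr[\overline{E_1}]\leq \exp(-\eps^{-7})$. Let $E_2$ be the event $p\bigl(\bigcup_{t=\ell+1}^m R_t\bigr)\geq\bigl(\tfrac{1-\ln 2}{2}-3\eps\bigr)\cdot p(\OPT)$; by \Cref{lem:secondHalfProfit}, $\Pr[\overline{E_2}]\leq \tfrac{1}{4}$. A union bound gives
\[
\Pr[E_1\cap E_2]\;\geq\; 1-\tfrac{1}{4}-\exp(-\eps^{-7})\;\geq\;\tfrac{1}{2},
\]
where the last inequality uses $\eps\in(0,0.01)$, so $\exp(-\eps^{-7})$ is negligibly small compared to $\tfrac{1}{4}$.

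The only conceptual point to verify is the disjointness of the two profit contributions. By Step~\ref{MainAlg:MKP}, the MK instance $\mathcal{I}'$ is built over the item set $S=I\setminus T$, so the configurations $R_{\ell+1},\ldots,R_m$ returned in Step~\ref{MainAlg:Res} are subsets of $S$, hence disjoint from $T=\bigcup_{t=1}^{\ell} R_t$. Consequently, whenever $E_1\cap E_2$ holds,
\[
p\Bigl(\bigcup_{t=1}^{m} R_t\Bigr)\;=\;p(T)+p\Bigl(\bigcup_{t=\ell+1}^{m} R_t\Bigr)\;\geq\;\Bigl(\tfrac{1}{2}-2\eps\Bigr)p(\OPT)+\Bigl(\tfrac{1-\ln 2}{2}-3\eps\Bigr)p(\OPT),
\]
and the right-hand side simplifies to $\bigl(1-\tfrac{\ln 2}{2}-5\eps\bigr)\cdot p(\OPT)$, giving the claim.

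There is essentially no obstacle beyond this: the two black-boxed lemmas already carry the entire analytic burden (the configuration-LP rounding concentration for $E_1$ and the reduction-to-MK guarantee for $E_2$). The only subtlety to flag in a careful write-up is that $\ell=\lceil m\ln 2\rceil$ slightly exceeds $m\ln 2$, which can only improve $1-e^{-\ell/m}$ relative to $\tfrac{1}{2}$ and can only decrease $\tfrac{m-\ell}{m}$ relative to $1-\ln 2$ by at most $1/m\leq \eps$; since both lemmas are stated with an additive $\eps$-slack and $\mathcal{I}$ is $\eps$-nice (so $m$ is huge), this rounding is absorbed into the $5\eps$ loss and requires no separate argument.
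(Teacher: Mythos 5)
Your proposal is correct and follows essentially the same approach as the paper: a union bound over the two good events from Lemmas~\ref{lem:firstHalfProfit} and~\ref{lem:secondHalfProfit}, plus arithmetic with $\alpha=\ln 2$. Your explicit remark that the MK solution lives on $I\setminus T$, so the two profit contributions are disjoint and additive, is a point the paper uses silently; making it explicit is a small improvement in rigor rather than a different route.
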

For an event $A$, let $\Bar{A}$ denote the complementary event.
\begin{proof}[Proof of \Cref{lem:conclusionProfit}]
Let $A$ be the event ``$p\left(T\right)> (1-e^{-\alpha}-2\cdot\eps)\cdot p(\OPT)$'', and $B$ the event ``$p\left(\bigcup_{t=\ell+1}^{m} R_{t}\right)> \left(\frac{1-\alpha}{2}-3\eps\right)\cdot  p(\OPT)$''.
If both $A$ and $B$ occur then \Cref{alg:MainAlgorithm} returns a solution of profit at least 
	$$\begin{aligned}
		p(T) + p\left( \bigcup_{t=\ell+1}^{m} R_{t}\right) 
		&\geq\left(1-e^{-\alpha}-2\eps+\frac{1-\alpha}{2}-3\eps\right) p(\OPT)
		&=\left(1-\frac{\ln{2}}{2}-5\eps\right) p(\OPT).
	\end{aligned}$$
	The inequality holds since both $A$ and $B$ occur. The equality holds since $\alpha=\ln{2}$.
	The probability that $A$ and $B$ occur is given by
	$$\begin{aligned}
		\Pr\left[A\cap B\right]&=1-\Pr\left[\Bar{A} \cup \Bar{B}\right] %
		&\geq 1-\left(\Pr\left[\Bar{A}\right]+\Pr\left[\Bar{B}\right]\right) %
		&\geq 1-\exp\left(-\eps^{-7}\right)-\frac{1}{4} = \frac{1}{2}
	\end{aligned}$$
	The first inequality follows from the union bound. The second inequality follows from Lemmas~\ref{lem:firstHalfProfit} and \ref{lem:secondHalfProfit}, and since $\eps<0.01$.  
\end{proof}
\subsection{Self-Bounding Functions}
\label{sec:concentration}
Lemmas~\ref{lem:firstHalfProfit} and \ref{lem:secondHalfProfit} we use a concentration bound for {\em self-bounding functions}.
\begin{definition}
	\label{def:self_bounding}
	A non-negative function $f:\mathcal{X}^n\rightarrow \mathbb{R}_{\geq0}$ is called self-bounding if there exist $n$ functions $f_1,\ldots,f_n:\mathcal{X}^{n-1}\rightarrow \mathcal{R}$ such that for all $x=(x_1,\ldots,x_n)\in\mathcal{X}^n$, $$\begin{aligned}&0\leq f(x)-f_t(x^{(t)}) \leq 1, \textnormal{~~~~~~~and~~}\\ &\sum_{t=1}^{n}\left(f(x)-f_t(x^{(t)})\right) \leq f(x),
		\end{aligned}
	$$ where $x^{(t)}=(x_1,\ldots,x_{t-1},x_{t+1},\ldots,x_n)\in\mathcal{X}^{n-1}$ is obtained by dropping the $t$-th component of $x$. 
\end{definition}

The next result is shown in ~\cite{BCS09}.
\begin{lemma}
	\label{lem:selfBoundLem}
	Let $f:\mathcal{X}^n \rightarrow \mathbb{R}_{\geq0}$ be a self-bounding function and let $X_1,\ldots,X_n\in \mathcal{X}$ be independent random variables. Define $Z=f(X_1,\ldots,X_n)$. Then the following holds:
	\begin{enumerate}
		\item $\Pr[Z\geq \E[Z]+t]\leq \exp\left(-\frac{t^2}{2\cdot\E[Z]+\frac{t}{3}}\right)$, for every $t\geq0$.
		\item $\Pr[Z\leq \E[Z]-t]\leq \exp\left(-\frac{t^2}{2\cdot\E[Z]}\right)$, for every $0<t<\E[Z]$.
	\end{enumerate}
\end{lemma}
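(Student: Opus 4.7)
The plan is the \emph{entropy method} (Herbst's argument) as developed by Boucheron, Lugosi and Massart. First I would reduce the problem to bounding the log-moment generating function $\psi(\lambda) = \log \E[e^{\lambda(Z-\E[Z])}]$: by Markov's inequality, $\Pr[Z-\E[Z] \geq t] \leq \exp(-\lambda t + \psi(\lambda))$ for $\lambda > 0$, and symmetrically for the lower tail, so a Bernstein-type bound $\psi(\lambda) \leq \frac{\lambda^2 \E[Z]}{2(1-\lambda/3)}$ on the right and a sub-Gaussian bound $\psi(\lambda) \leq \frac{\lambda^2 \E[Z]}{2}$ on the left, when optimized in $\lambda$, yield exactly the two tail inequalities stated in the lemma, with the asymmetric $t/3$ term appearing only on the upper tail.

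To bound $\psi$, I would use the identity $(\psi(\lambda)/\lambda)' = \mathrm{Ent}(e^{\lambda Z})/(\lambda^2\,\E[e^{\lambda Z}])$, which reduces the task to bounding $\mathrm{Ent}(e^{\lambda Z})$. Apply the tensorization of entropy $\mathrm{Ent}(e^{\lambda Z}) \leq \sum_{t=1}^n \E[\mathrm{Ent}_t(e^{\lambda Z})]$, where $\mathrm{Ent}_t$ denotes entropy conditional on $X^{(t)} = (X_1,\ldots,X_{t-1},X_{t+1},\ldots,X_n)$. Since $Z_t := f_t(X^{(t)})$ is $X^{(t)}$-measurable, a modified logarithmic-Sobolev inequality bounds each $\mathrm{Ent}_t(e^{\lambda Z})$ by a ``discrete gradient'' expression of the form $\E_t[\phi_\lambda(Z-Z_t)\,e^{\lambda Z}]$ with $\phi_\lambda(u)=e^{-\lambda u}+\lambda u -1$. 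Summing in $t$ and using the two self-bounding hypotheses $Z - Z_t \in [0,1]$ and $\sum_t (Z-Z_t) \leq Z$ collapses the sum into a Bernstein-type differential inequality of the form $\lambda \psi'(\lambda) - \psi(\lambda) \leq g(\lambda) \E[Z]$. Integrating this (the Herbst step) then yields the required bound on $\psi(\lambda)$, and undoing the reduction recovers the two stated tail inequalities.

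The main obstacle will be the modified log-Sobolev step, which is the only place where the self-bounding structure is genuinely used: a naive bounded-differences argument (McDiarmid) would yield a variance proxy of $n$, the number of coordinates, whereas the inequality $\sum_t (Z - Z_t) \leq Z$ is precisely what lets us replace $n$ by $\E[Z]$, producing a \emph{dimension-free} Bernstein bound. As emphasized in the Technical Overview, this dimension independence is what the application requires: for the concentration of $p(T)$ in \Cref{lem:firstHalfProfit}, the McDiarmid route would pay a factor $\ell = \lceil m\ln 2 \rceil$ in the variance proxy that the self-bounding formulation avoids, and it is exactly this sharpening that makes the argument go through on $\eps$-nice instances.
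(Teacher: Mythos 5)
The paper states Lemma~\ref{lem:selfBoundLem} as a black-box import from Boucheron et al.\ \cite{BCS09} and gives no proof, so there is no internal argument to compare against; your entropy-method sketch (Chernoff reduction to $\psi(\lambda)$, Herbst's argument via tensorization of entropy and a modified log-Sobolev inequality, then collapsing the sum of discrete-gradient terms using both self-bounding hypotheses to reach a dimension-free Bernstein-type differential inequality) is exactly the proof technique of that reference, and the plan is correct. One small caveat worth flagging: the intermediate bound $\psi(\lambda)\leq\lambda^2\E[Z]/(2(1-\lambda/3))$ that you write optimizes under Chernoff to $\exp(-t^2/(2\E[Z]+2t/3))$, i.e.\ with a $2t/3$ term rather than the $t/3$ appearing in the lemma's statement, so to match the lemma \emph{exactly} you would need a slightly sharper bound on $\psi$; in any case the $2t/3$ form is what the standard route yields, and it suffices verbatim for the downstream uses in \Cref{lem:firstHalfProfit} and \Cref{clm:profitBound}, where the concentration inequality is invoked with ample slack.
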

We use the following construction of self-bounding functions several times in the paper. 
Recall that $\cC = \cC(\cI)$.
\begin{lemma}
	\label{lem:boundFunc}
	Let $\mathcal{I}=(I,w,p,m)$ be a $d$VMK instance, and $h:I\rightarrow \mathbb{R}_{\geq0}$. Define $f:\mathcal{C}^{\ell}\to\R$ by $f(C_1,\ldots,C_\ell)=\frac{h(\bigcup_{i\in [\ell]}C_i)}{\eta}$ 
	$\eta \geq \max_{C\in\mathcal{C}}h(C)$
	Then $f$ is a self-bounding function.
\end{lemma}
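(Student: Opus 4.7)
The plan is to define, for each $t \in [\ell]$, the companion function
\[
f_t(C_1, \ldots, C_{t-1}, C_{t+1}, \ldots, C_\ell) := \frac{h\!\left(\bigcup_{i \neq t} C_i\right)}{\eta},
\]
i.e., the same expression as $f$ but with the $t$-th configuration removed. Since the convention $h(A) = \sum_{i \in A} h(i)$ makes $h$ additive over items, the difference $f(C_1,\ldots,C_\ell) - f_t(C_1,\ldots,C_{t-1},C_{t+1},\ldots,C_\ell)$ equals $h(U_t)/\eta$, where $U_t := C_t \setminus \bigcup_{i \neq t} C_i$ is the set of items appearing in $C_t$ and in none of the other configurations.

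First I would verify the pointwise bound $0 \leq f(x) - f_t(x^{(t)}) \leq 1$. Non-negativity is immediate from $h \geq 0$. For the upper bound, $U_t \subseteq C_t$ together with the hypothesis $\eta \geq \max_{C \in \mathcal{C}} h(C)$ yields $h(U_t) \leq h(C_t) \leq \eta$, so $f(x) - f_t(x^{(t)}) \leq 1$.

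Next, for the summability condition, the key observation is that the sets $U_1, \ldots, U_\ell$ are pairwise disjoint: an item lies in $U_t$ only if $C_t$ is the unique configuration containing it, so it cannot simultaneously lie in $U_t$ and $U_s$ for $s \neq t$. Combined with $U_t \subseteq \bigcup_{i} C_i$ and additivity of $h$, this gives
\[
\sum_{t=1}^{\ell} \bigl(f(x) - f_t(x^{(t)})\bigr) \;=\; \frac{1}{\eta}\sum_{t=1}^{\ell} h(U_t) \;=\; \frac{1}{\eta}\, h\!\left(\bigcup_{t=1}^{\ell} U_t\right) \;\leq\; \frac{1}{\eta}\, h\!\left(\bigcup_{i=1}^{\ell} C_i\right) \;=\; f(x),
\]
as required.

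The main step worth getting right is the choice of the dropping operation: the coordinates of the self-bounding function are configurations in $\mathcal{C}$, not individual items, so $f_t$ must be defined as ``the same union over the remaining $\ell - 1$ configurations'' rather than by any item-level modification. Once this is in place, both self-bounding axioms reduce to the item-additivity of $h$ plus the one-line double-counting argument above, so I do not anticipate any real obstacle.
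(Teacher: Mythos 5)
Your proposal is correct and matches the paper's proof essentially step for step: you define $f_t$ by dropping the $t$-th configuration (equivalent to the paper's substitution of $\emptyset$ for $C_t$), identify the difference $f(x)-f_t(x^{(t)})$ as $h(C_t \setminus \bigcup_{i\neq t}C_i)/\eta$, bound it in $[0,1]$ via $h\geq 0$ and $\eta \geq \max_C h(C)$, and establish the summability condition from the pairwise disjointness of those difference sets together with the additivity of $h$. No gaps.
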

The proof of \Cref{lem:boundFunc} is given in \Cref{sec:ommited}.

\subsection{Profit of the Sampled  Configurations}
\label{sssec:SampledConfig}
In this section we prove \Cref{lem:firstHalfProfit}; namely, we show that with high probability the profit $p(\cup_{t\in \{1,\ldots,\ell\}}R_t)$ is sufficiently large.
We first prove the next lemma.
\begin{lemma}
	\label{lem:profitT}
	$\E[p(T)]\geq (1-e^{-\alpha}-\eps)\cdot p(\OPT)$.
\end{lemma}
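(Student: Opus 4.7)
The plan is to reduce the lemma to a standard concavity estimate for independent randomized rounding. For each item $i\in I$, let $y_i = \sum_{C\in \mathcal{C}(i)} \bx_C$, so by the C-LP covering constraint $y_i\in[0,1]$. Since the configurations $R_1,\ldots,R_\ell$ are i.i.d.\ and each $R_t$ is distributed by $\bx$, we have $\Pr[i\in R_t]=y_i/m$, and by independence
\begin{equation*}
\Pr[i\notin T] \;=\; \left(1-\frac{y_i}{m}\right)^{\ell}, \qquad \Pr[i\in T] \;=\; 1-\left(1-\frac{y_i}{m}\right)^{\ell}.
\end{equation*}
Summing over items gives $\E[p(T)]=\sum_{i\in I} p(i)\cdot \bigl(1-(1-y_i/m)^{\ell}\bigr)$.

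Next I would carry out the standard linearization step. The function $g(y)=1-(1-y/m)^{\ell}$ is concave on $[0,1]$ with $g(0)=0$, so $g(y)\geq y\cdot g(1)$ for every $y\in[0,1]$. Applying this with $y=y_i$ and summing yields
\begin{equation*}
\E[p(T)] \;\geq\; \left(1-\left(1-\frac{1}{m}\right)^{\ell}\right)\cdot \sum_{i\in I} p(i)\,y_i.
\end{equation*}
A routine swap of sums shows $\sum_{i\in I} p(i)\,y_i = \sum_{C\in \mathcal{C}}\sum_{i\in C}\bx_C\,p(i) = x^*$, which is the value of the $(1-\eps)$-approximate C-LP solution. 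Since C-LP is a relaxation of 2VMK, $x^* \geq (1-\eps)\cdot p(\OPT)$.

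Finally I would bound the leading factor. Using $(1-1/m)^\ell \le e^{-\ell/m}$ together with $\ell =\lceil m\ln 2\rceil \geq \alpha m$ (recall $\alpha=\ln 2$), we get $1-(1-1/m)^{\ell}\geq 1-e^{-\alpha}$. Combining the two bounds,
\begin{equation*}
\E[p(T)] \;\geq\; (1-e^{-\alpha})(1-\eps)\cdot p(\OPT) \;\geq\; (1-e^{-\alpha}-\eps)\cdot p(\OPT),
\end{equation*}
where the last inequality is $(1-e^{-\alpha})(1-\eps)=1-e^{-\alpha}-\eps(1-e^{-\alpha})\geq 1-e^{-\alpha}-\eps$. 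There is no real obstacle in this lemma; the only delicate point is justifying the concavity-based inequality $g(y)\ge y\,g(1)$ on $[0,1]$, which follows immediately from $g(0)=0$ and concavity of $g$. The harder work of this section lies in the concentration statement of Lemma~\ref{lem:firstHalfProfit}, which will rely on this expectation bound together with the self-bounding machinery of Section~\ref{sec:concentration}.
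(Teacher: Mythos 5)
Your proof is correct and follows essentially the same route as the paper: compute $\Pr[i\in T]$ exactly via independence, lower-bound it by a quantity linear in $y_i=\sum_{C\in\cC(i)}\bx_C$ using a concavity argument, swap sums to recover the C-LP value $x^*$, and finish with $x^*\geq(1-\eps)p(\OPT)$. The only cosmetic difference is the order of operations: the paper first replaces $(1-y_i/m)^{\ell}$ by $e^{-\alpha y_i}$ and then linearizes $1-e^{-\alpha y}$, whereas you linearize $1-(1-y/m)^{\ell}$ directly and invoke the exponential bound only once at $y=1$, which is arguably a bit cleaner but not a different idea.
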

\begin{proof}
Let $\cC(i) = \cC(\cI, i)$ be the set of configurations containing item $i \in \cI$, then for every $i\in I$, the probability that $i$ is not contained in the sampled configurations is
	\begin{align*}
		\Pr[i\notin T]=\Pr[i\notin \cup_{t\in\{1,\ldots,\ell\}}R_t] 
		=\prod_{t\in\{1,\ldots,\ell\}}\Pr[i\notin R_t] 
		=\left(1-\sum_{C\in \mathcal{C}(i)}\frac{x_C}{m}\right)^\ell,
	\end{align*}
where third equality holds since $\Pr[i\in R_t]=\sum_{C\in \mathcal{C}(i)}\frac{x_C}{m}$, for all $t\in \{1,\ldots,\ell\}$. Hence, 
\begin{align*}\Pr[i\notin T] =\left(1-\sum_{C\in \mathcal{C}(i)}\frac{x_C}{m}\right)^\ell
		\leq \left(1-\sum_{C\in \mathcal{C}(i)}\frac{x_C}{m}\right)^{m\cdot \alpha \cdot\frac{ \sum_{C\in \mathcal{C}(i)}x_C}{\sum_{C\in \mathcal{C}(i)}x_C}}
		\leq \exp\left(-\alpha \cdot \sum_{C\in \mathcal{C}(i)}x_C\right).
	\end{align*}
	 The first inequality holds since $\ell \geq \alpha \cdot m$. The second inequality holds by $(1-\frac{1}{x})^x\leq e$ for $x\geq 1$. 
	Thus, we have
	$$\begin{aligned}
		\Pr[i\in T]&=1-\Pr[i\notin T] %
		&\geq \left(1-\exp\left(-\alpha \cdot \sum_{C\in \mathcal{C}(i)}x_C\right)\right)%
		&\geq \sum_{C\in \mathcal{C}(i)}x_C\cdot (1-e^{-\alpha})
	\end{aligned}$$
	For the second inequality, we used $1-e^{-xa}\geq x\cdot (1-e^{-\alpha})$ for $x,a\in [0,1]$.
	Therefore,
	$$\begin{aligned}
		\E[p(T)]
		&=\sum_{i\in I}p(i)\cdot \Pr[i\in T]\\
		&\geq \sum_{i\in I} p(i)\cdot \sum_{C\in \mathcal{C}(i)}x_C\cdot (1-e^{-\alpha})\\
		&=(1-e^{-\alpha})\cdot \sum_{C\in \mathcal{C}} \sum_{i\in C}x_C \cdot p(i) \\
		&= (1-e^{-\alpha}) \cdot x^*\\
		&\geq (1-e^{-\alpha}-\eps) \cdot p(\OPT).
	\end{aligned}$$
	The third equality follows  from our definition of $x^*$ as the value of the solution $\mathbf{x}$ found in Step~\ref{MainAlg:PTASConfig} of ~\Cref{alg:MainAlgorithm}. The second inequality holds since $x^*\geq (1-\eps)\cdot p(\OPT)$.
\end{proof}
\begin{proof}[Proof of \Cref{lem:firstHalfProfit}]        
	Define $f:C^{\ell}\to\R$ by $f(C_1,\ldots,C_{\ell})=\frac{p(\bigcup_{t\in \{1,\ldots,\ell\}}C_t)}{\eps^{10}\cdot p(\OPT)}$
	As the instance $\mathcal{I}$ is $\eps$-nice, we have that $\eps^{10}\cdot p(\OPT)\geq \max_{C\in\mathcal{C}}p(C)$. By \Cref{lem:boundFunc}, $f$ is a self-bounding function. 
Hence, by \Cref{lem:profitT} we have,
	$$\begin{aligned}
		\Pr&\left[p(T)\leq (1-e^{-\alpha}-2\cdot \eps)\cdot p(\OPT)\right]\\&\leq \Pr\left[\frac{p(T)}{\epsilon^{10}\cdot p(OPT)}\leq \frac{\E[p(T)]}{\epsilon^{10}\cdot p(\OPT)}-\frac{\eps\cdot p(\OPT)}{\epsilon^{10}\cdot p(\OPT)}\right]\\
		&=\Pr\left[f(R_1,\ldots,R_{\ell})\leq \E[f(R_1,\ldots,R_{\ell})]-\eps^{-9}\right]\\
		&\leq \exp\left(-\frac{\eps^{-18}}{2\cdot\E[f(R_1,\ldots,R_{\ell})]}\right)\\
		&\leq \exp\left(-\frac{\eps^{-18}}{2\cdot\frac{p(\OPT)}{\epsilon^{10}\cdot p(\OPT)}}\right) \leq \exp\left(-\eps^{-7}\right).
	\end{aligned}$$ The first equality holds by the definition of $f$. The second inequality follows from \Cref{lem:selfBoundLem}, by taking $t=\eps^{-9}$. The third inequality holds since $f(R_1,\ldots, R_{\ell}) \leq \frac{p(\OPT)}{\eps^{10}\cdot p(\OPT)}$, as $R_1,\ldots, R_{\ell}$ along with additional $m-\ell$ empty configurations is a solution for $\mathcal{I}$. The fourth inequality holds since $2\cdot \eps\leq 1$.
\end{proof}
\subsection{The Solution for the Residual Items}
\label{sssec:ResidualConfig}
In this section we prove \Cref{lem:secondHalfProfit}. Specifically, we show that the profit of the solution for the MK instance constructed in Step \ref{MainAlg:MKP} of \Cref{alg:MainAlgorithm} is sufficiently high. Since we obtain a $(1-\eps)$-approximate solution for the MK instance $\mathcal{I}'$, we only need to derive a lower bound for~$\OPT(\mathcal{I}')$.   To this end, we show that there exists a set $Q\subseteq \OPT$, such that the set $Q\setminus T$ has sufficiently high profit $p(Q\setminus T)$, and $Q\setminus T$ can be almost entirely packed in twice the number of remaining bins.
We choose among these bins the most profitable ones to obtain the lower bound.
In our analysis, we use the notion of subset-obliviousness, introduced in \cite{BNA}. 
The following is a simplified version of a definition given in~\cite{BNA} w.r.t. the {\sc bin packing} (BP) problem. Let $\BPOPT(I,w)$ denote the size of an optimal solution for a BP instance $\cI = (I,w)$.
\begin{definition}
	\label{def:subsetOb}
	Let $\rho>1$. We say that \textnormal{Bin Packing} is $\rho$-subset oblivious if, for any fixed $\eps>0$, there exist  $k,\psi, \delta$ (possibly depending on $\eps$) such that, for any BP instance $\mathcal{I}=(I,w)$, there exist functions $g_1,\ldots,g_k:2^I\rightarrow\mathbb{R}_{\geq0}$ which satisfy the following. 
	\begin{enumerate}
		\item [\textnormal {(i)}] $g_t(C)\leq \psi$ for any $C\in \mathcal{C}(\mathcal{I})$ and $t\in \{1,\ldots, k\}$;
		\item [\textnormal {(ii)}] $\BPOPT(I,w)\geq\max_{t\in\{1,\ldots,k\}}g_t(I)$;
		\item [\textnormal {(iii)}] $\BPOPT(S,w)\leq\rho\cdot \max_{t\in\{1,\ldots,k\}}g_t(S)+\eps\cdot \BPOPT(I,w)+\delta$, for all $S\subseteq I$.
	\end{enumerate}
	We refer to the values $k$, $\psi$  and $\delta$ as the $(\rho, \eps)$-subset oblivious parameters of {\sc Bin Packing}, and the functions $g_1,\ldots, g_k$ as the $(\rho,\eps)$-subset oblivious functions of $\mathcal{I}$.
\end{definition}

The next lemma follows from a result of~\cite{BNA}.
\begin{lemma}
	\label{lem:subsetAppr}
	For any fixed $\eps>0$, 
	\textnormal{Bin Packing} is $(1+\eps)$-subset oblivious, and the $(1+\eps,\eps)$ parameters $k, \delta, \psi$ satisfy $k\leq \exp^{[3]}(\eps^{-1})$, $\delta\leq \frac{4}{\eps^4}$, and $\psi \leq 1$.
\end{lemma}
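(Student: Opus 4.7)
The plan is to adapt the subset-oblivious construction for Bin Packing from~\cite{BNA}, translating their construction into our parameter language and verifying the claimed bounds on $k$, $\psi$, and~$\delta$.

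First, I would perform the standard APTAS preprocessing. Fix a threshold $\gamma = \gamma(\eps)$ so that items of weight $\geq \gamma$ are \emph{large}; apply linear grouping to reduce large items to $K \leq \gamma^{-2}$ distinct weight classes, and enumerate the multisets of class-labels that fit into a unit bin. This yields a collection $\mathcal{T}$ of configuration types of size at most $(\gamma^{-1}+1)^{K}$, which for a suitable choice $\gamma = \Theta(\eps)$ is bounded by $\exp^{[3]}(\eps^{-1})$, matching the bound on~$k$.

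Second, for each $t \in \mathcal{T}$, following~\cite{BNA}, I would define $g_t(S)$ as an appropriately scaled count of large items of $S$ in the classes appearing in~$t$ --- informally, the number of bins of pattern $t$ forced by the type-composition of $S$. I would also include the total-weight function $g_0(S)=w(S)$. Each $g_t$ is nonnegative and, on any single configuration $C \in \mathcal{C}(I,w)$, takes value at most~$1$, giving $\psi \leq 1$. The lower bound $\BPOPT(I,w) \geq \max_t g_t(I)$ is immediate since each $g_t$ evaluated at $I$ is a valid combinatorial lower bound on the number of bins: $g_0(I)$ is the total load, and $g_t(I)$ counts a class-forced bin requirement.

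For the upper bound $\BPOPT(S,w) \leq (1+\eps)\max_t g_t(S) + \eps \cdot \BPOPT(I,w) + \delta$, I would construct an explicit packing of $S$: solve the configuration-LP on $S$ restricted to large items, round it to an integral packing with at most a $(1+\eps)$ multiplicative loss (using the Karmarkar--Karp style rounding), and then insert small items into the opened bins via First-Fit-Decreasing. Standard accounting of the LP integrality gap together with the small-item residual yields the additive error $\delta \leq 4/\eps^{4}$. The main obstacle is bookkeeping the additive terms consistently: one must verify that the $O(\eps^{-4})$ term simultaneously bounds the rounding overhead and the small-item overflow, without inflating $k$ beyond the triple-exponential dependence --- this is exactly the technical balance struck in~\cite{BNA}, and I would cite their calculation for the explicit constants.
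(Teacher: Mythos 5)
The paper does not actually prove this lemma; it simply states that the lemma ``follows from a result of~\cite{BNA}'' and cites that paper for the construction. Your proposal therefore attempts more than the paper does, which is fine, but your sketch has a genuine gap in the central definition that keeps properties~(i) and~(ii) from following as you claim. You define $g_t(S)$ informally as ``an appropriately scaled count of large items of $S$ in the classes appearing in $t$,'' but this is not how the subset-oblivious functions for bin packing are built in~\cite{BNA}, and the two key bounds do not follow from it. First, a configuration $C$ can contain arbitrarily many items of a given (small-enough) size class, so a per-class count need not be bounded on a single configuration unless you carefully scale each class by its multiplicity in a bin -- but then the resulting function is no longer a lower bound on $\BPOPT(I,w)$, so you lose~(ii). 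Second, ``class-forced bin requirement'' is not a combinatorial lower bound in general: knowing the type-composition of $S$ does not by itself force a number of bins of a specific pattern $t$.

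The missing ingredient is LP duality. In~\cite{BNA} the functions $g_t$ are built from (a discretized family of) dual-feasible solutions of the bin-packing configuration LP on the linearly grouped instance: each $g_t$ assigns a ``dual size'' $y_t(r(i)) \in [0,1]$ to item $i$ according to its rounded class $r(i)$, and $g_t(S) = \sum_{i\in S} y_t(r(i))$. Dual feasibility immediately gives $g_t(C)\leq 1$ for every configuration $C$ (so $\psi\leq 1$), and weak duality gives $g_t(I)\leq\mathrm{LP}(I)\leq\BPOPT(I,w)$, so~(ii) holds. For~(iii), the function $g_t$ matched to the rounding of $S$ lower-bounds $\mathrm{LP}(S)$ up to the grouping loss, and your Karmarkar--Karp-style rounding plus small-item insertion then gives $\BPOPT(S,w)\leq(1+\eps)\cdot g_t(S)+\eps\cdot\BPOPT(I,w)+O(\eps^{-4})$. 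The parameter $k$ counts the discretized dual vectors (over all threshold choices for linear grouping), and it is this enumeration -- not an enumeration of configuration types $\mathcal{T}$ -- that is bounded by $\exp^{[3]}(\eps^{-1})$. Replacing your ``scaled counts'' by dual-feasible functions closes the gap and makes the rest of your outline (linear grouping for $k$, configuration-LP rounding for $\delta$) align with the construction the paper is citing.
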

Let $\mathcal{J}=(\OPT,w',p,2\cdot m)$ be the $2$-associated MK instance of $\mathcal{I}_{\OPT}=(\OPT,w,p,m)$.
By \Cref{lem:subsetAppr}, {\sc Bin Packing} is $(1+\eps^2)$-subset oblivious. Thus, there exist $k,\psi,\delta$ which are $(1+\eps^2, \eps^2)$ subset oblivious parameters of Bin Packing. Let $g_1,\ldots,g_k$ be the $(1+\eps^2, \eps^2)$ subset-oblivious functions of the Bin Packing instance $(\OPT,w')$.
By \Cref{lem:subsetAppr}, the values $k$, $\psi$ and $\delta$ satisfy, $k\leq \exp^{[3]}(\eps^{-2})$, $\delta\leq \frac{4}{\eps^8}$, $\psi \leq 1$.
Define $\alpha'=\frac{\ceil{\alpha\cdot m}}{m}$, then as $\alpha'\cdot m-\alpha\cdot m\leq 1$, we have that $\alpha'-\alpha\leq \frac{1}{m}\leq \eps$.
\begin{lemma}
	\label{lem:auxilaryQ}
	There exists $Q\subseteq \OPT$ which satisfies the following.
	\begin{enumerate}
		\item $\Pr\left[g_t(Q\setminus T) \geq (1-\alpha')\cdot g_t(\OPT) +k\cdot\psi+ \eps^{10} \cdot m\right]\leq \exp\left(-\frac{\eps^{21}\cdot m}{\psi^2}\right)$, for all $t\in \{1,\ldots,k\}$.
		\label{Q_prop_g_t}
		\item $\Pr\left[p(Q\setminus T)\leq (1-\alpha'-\eps)\cdot p(\OPT)\right]\leq \exp\left(-\eps^{-7}\right)$.
		\label{Q_prop_profit}
	\end{enumerate}
\end{lemma}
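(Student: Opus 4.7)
My plan is to construct $Q \subseteq \OPT$ via a probabilistic construction followed by derandomization, and then obtain the tail bounds via the self-bounding concentration of \Cref{lem:selfBoundLem,lem:boundFunc}.

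The key observation is that for every $i \in \OPT$, $\Pr[i \notin T] = (1-s_i/m)^\ell$ where $s_i = \sum_{C \in \cC(i)} x_C \in [0,1]$. I would define a random subset $\tilde Q$ of $\OPT$ by including each $i \in \OPT$ independently with probability $q_i = (1-\alpha')/(1-s_i/m)^\ell$; since $(1-s_i/m)^\ell \geq (1-1/m)^\ell \to e^{-\alpha'} > 1-\alpha'$, this $q_i$ lies in $(0,1)$ for large $m$. The construction yields $\Pr[i \in \tilde Q \setminus T] = q_i\cdot(1-s_i/m)^\ell = 1-\alpha'$ for every $i \in \OPT$, hence $\E[p(\tilde Q \setminus T)] = (1-\alpha')p(\OPT)$ and (assuming linearity of the subset-oblivious $g_t$'s, as for the bin-packing functions of \cite{BNA}) $\E[g_t(\tilde Q \setminus T)] = (1-\alpha')g_t(\OPT)$ for every $t \in [k]$. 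Note that a natural first attempt $Q=\OPT$ fails the first property: when LP coverages $s_i$ on $\OPT$ are close to~$1$, $\E[g_t(\OPT \setminus T)]\approx g_t(\OPT)/2$, exceeding the required $(1-\alpha')g_t(\OPT)\approx 0.307\,g_t(\OPT)$. The per-item rescaling by $q_i$ precisely corrects for this.

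Next, I would show by a second-moment argument that, with high probability over the randomness of $\tilde Q$, the conditional expectations $\E_T[p(\tilde Q \setminus T)]$ and $\E_T[g_t(\tilde Q \setminus T)]$ are tightly concentrated around their means; the variances are controlled using $p(C)\leq \eps^{20}p(\OPT)$ (by $\eps$-niceness) and $g_t(C)\leq \psi$ (by property~(i) of subset-obliviousness). A union bound over the $k+1$ bad events then extracts a deterministic $Q \subseteq \OPT$ for which $\E_T[p(Q \setminus T)] \geq (1-\alpha'-O(\eps))p(\OPT)$ and $\E_T[g_t(Q \setminus T)] \leq (1-\alpha')g_t(\OPT) + k\psi/2$ for every $t$. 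Both tail bounds in the lemma then follow by concentration via \Cref{lem:boundFunc,lem:selfBoundLem}, mirroring the proof of \Cref{lem:firstHalfProfit}: for the second property, using $h(i)=p(i)\cdot\mathbf{1}[i\in Q]$ with $\eta=\eps^{20}p(\OPT)$ and deviation $t=\eps^{-9}$ in \Cref{lem:selfBoundLem}; for the first, using $h(i)=g_t(\{i\})\cdot\mathbf{1}[i\in Q]$ with $\eta=\psi$ and deviation $t=\eps^{10}m/\psi$, which produces the exponent $\eps^{21}m/\psi^2$.

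The main obstacle is the derandomization: simultaneously controlling $k+1$ expected-value bounds for a single deterministic $Q$ requires careful second-moment and union-bound analysis, with the $k\psi$ and $\eps^{10}m$ slack terms in the lemma designed to absorb the resulting rounding losses. An additional subtlety lies in handling the subset-oblivious $g_t$ if they are not exactly linear functions on items; for the bin-packing construction of \cite{BNA} the $g_t$ are sums of per-item weights, matching the structure needed here, but in general the argument requires replacing $g_t(\{i\})$ by an appropriate per-item upper bound while keeping the aggregate bound $g_t(C)\leq \psi$.
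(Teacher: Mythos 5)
Your key insight is exactly the paper's: defining the per-item rescaled inclusion values $q_i=(1-\alpha')/\Pr[i\notin T]$ is precisely what the paper's Q-LP does (compare the feasible point $y_i'$ in Claim \ref{clm:solutionQualities}). Where you diverge genuinely is in how you turn the fractional $q_i$'s into a deterministic set $Q$. The paper takes a \emph{basic optimal} solution of Q-LP: since the LP has exactly $k$ non-box constraints, a vertex has at most $k$ fractional coordinates, and rounding them up gives immediately $\E[g_t(Q\setminus T)]\leq(1-\alpha')g_t(\OPT)+k\psi$ with the clean $k\psi$ additive slack. You instead sample $\tilde Q$ independently with probabilities $q_i$ and derandomize by a concentration--plus--union-bound argument. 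That route is sound, but your claimed output slack of $k\psi/2$ is not what it produces: for a sum of independent $[0,\psi]$-bounded terms with mean up to $2m\psi$, Bernstein at confidence $1/(k+1)$ yields deviation on the order of $\psi\sqrt{m\log k}$, not $O(k\psi)$. Fortunately the lemma's $\eps^{10}m$ slack term and the lower bound $m\geq\exp^{[3]}(\eps^{-30})$ easily absorb $\psi\sqrt{m\log k}$, so your approach still closes the argument — but the $k\psi$ term in the lemma statement is the signature of the LP-sparsity argument, not of your derandomization. The paper's route is sharper and avoids the extra union-bound bookkeeping, while yours is more elementary and generalizes more readily if one wanted to drop the LP machinery.

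Two smaller points. First, in the self-bounding step you set $\eta=\psi$ for the $g_t$ bound, but the relevant configurations live in $\mathcal{C}(\mathcal{I})$ (two-dimensional), and a two-dimensional configuration intersected with $Q$ may split into \emph{two} one-dimensional configurations (\Cref{lem:solutionTranslator}), so $g_t(C\cap Q)$ can be as large as $2\psi$; the normalization must be $\eta=2\psi$, as the paper uses, otherwise $f$ is not self-bounding. Second, you are right that both arguments require the $g_t$ to be additive over items; the paper uses this silently in Claims \ref{clm:solutionQualities} and \ref{clm:itemsMean}, and you are correct that it holds for the Bansal--Caprara--Sviridenko subset-oblivious functions.
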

\begin{proof}
To show the existence of the set $Q$ satisfying the properties in the lemma, consider first the following optimization problem. Given an optimal solution $\OPT$ for a 2VMK instance $\cI$, find a subset of items $Q \in \OPT$ for which $\E\left[p(Q\setminus T)\right]$ is maximized, under the constraint that $\E\left[g_t(Q\setminus T)\right] \leq (1-\alpha')g_t(\OPT)$ for all $t \in [k]$. Let $y_i \in \{ 0, 1 \}$ be an indicator for the inclusion of item $i \in \OPT$  in $Q$. We can formulate an integer program for the above optimization problem. In the following LP relaxation we have $0 \leq y_i \leq 1$, $\forall ~ i \in \OPT$.
	\begin{align}
		(\textnormal{$Q$-LP})~~ &\textnormal{max} ~~~& &\sum_{i \in \OPT} y_i\cdot p(i)\cdot \Pr[i\notin T]  &&        \nonumber                                                \\
		&\textnormal{s.t.} && \sum_{i \in \OPT} y_i\cdot \Pr[i\notin T]\cdot g_t(i) \leq (1-\alpha')g_t(\OPT)  & \forall t \in \{1,\ldots,k\}~~~& \\
		& && 0\leq y_i\leq 1 & \forall i \in \OPT~~~~~~~~&
		\nonumber
	\end{align}
	Let $\mathbf{y}^*$ be a basic optimal solution for Q-LP. We define 
	\begin{equation}
	\label{eq:def_Q_in_OPT}
	Q=\{i\in Q~|~y^*_i>0\} 
	\end{equation}
	to be  the set of all items with positive entries in $\mathbf{y}^*$.
We show that the set $Q$  defined in (\ref{eq:def_Q_in_OPT}) satisfies $\E\left[p(Q\setminus T)\right] \geq (1- \alpha') p(\OPT)$. To this end, we prove the next claim.
	\begin{claim}
		\label{clm:solutionQualities}
		$\sum_{i \in \OPT} y_i^*\cdot p(i)\cdot \Pr[i\notin T]\geq (1-\alpha')\cdot p(\OPT)$.
	\end{claim}
\begin{proof}
	For every $i\in \OPT$, the following holds:
	\begin{align*}
		\Pr[i\in T]&=\Pr\left[\exists t\in \{1,\ldots,\ell\},i \in R_t\right]\\
		&\leq \sum_{t\in \{1,\ldots,\ell\}}\Pr\left[i \in R_t\right]\\
		&=\sum_{t\in \{1,\ldots,\ell\}}\sum_{C\in \mathcal{C}(i) }\frac{x_C}{m}
		\leq \sum_{t\in \{1,\ldots,\ell\}}\frac{1}{m} 
		= \frac{\ell}{m}.
	\end{align*}
	Consider the vector $\mathbf{y}'=(y_1',\ldots,y'_{|\OPT|})$ where $y_i'=\displaystyle{\frac{1-\alpha'}{\Pr[i\notin T]}}$ for every $i \in \OPT$. Then $\mathbf{y}'$ is a feasible solution for Q-LP since the following holds:
	\begin{enumerate}
		\item For every $i\in \OPT$, $y_i'=\frac{1-\alpha'}{\Pr[i\notin T]}$ satisfies the following, $$0\leq\frac{1-\alpha'}{\Pr[i\notin T]}=\frac{1-\alpha'}{1-\Pr[i\in T]} \leq \frac{1-\alpha'}{1-\frac{\ell}{m}}=\frac{1-\alpha'}{1-\frac{\alpha'\cdot m}{m}}=1.$$ Therefore $0 \leq y_i'\leq 1$, for all $i\in \OPT$.  
		\item For every $t\in \{1,\ldots,k\}$, the following holds: $$\sum_{i\in \OPT}y_i'\cdot \Pr[i\notin T]\cdot g_t(i)=(1-\alpha')\cdot \sum_{i\in \OPT} g_t(i)=(1-\alpha')\cdot g_t(\OPT).$$
	\end{enumerate} 
	The objective value $\sum_{i\in \OPT} y_i'\cdot p(i)\cdot \Pr[i\notin T]$ satisfies: $$\sum_{i\in \OPT} y_i'\cdot p(i)\cdot \Pr[i\notin T]=\sum_{i\in \OPT} \left(1-\alpha'\right)\cdot p(i)=\left(1-\alpha'\right)\cdot p(\OPT).$$ This implies that the objective value of an optimal solution for Q-LP is at least $\left(1-\alpha'\right)\cdot p(\OPT)$. Hence, $\sum_{i \in \OPT} y_i^*\cdot p(i)\cdot \Pr[i\notin T]\geq (1-\alpha')\cdot p(\OPT)$.
\end{proof}
	\begin{claim}
		\label{clm:itemsMean}
		The subset $Q$ satisfies the following properties.
		\begin{enumerate}
			\item $\E\left[g_t(Q\setminus T)\right] \leq (1-\alpha')g_t(\OPT)+k\cdot\psi$, for every $t\in\{1,\ldots,k\}$.
			\item $\E\left[p(Q\setminus T)\right] \geq (1-\alpha')p(\OPT)$.
		\end{enumerate}
	\end{claim}
	\begin{proof}
		The basic optimal solution $\mathbf{y}^*$ has at least $|\OPT|$ tight constraints. Therefore, at least $|\OPT|-k$ constraints of the form $y_i\geq0$ or $y_i\leq1$ are tight, i.e.,
		we have 
		at least $|\OPT|-k$ variables $y_i$ with tight constraint. Let $B=\{i\in\OPT~|~0<y_i^*<1\}$ the set of fractional variables,
		then $|B|\leq k$.     
		For every $t\in\{1,\ldots,k\}$, the following holds:
		\begin{align*} 
			\E\left[g_t(Q\setminus T)\right]&=\sum_{i\in Q}1\cdot\Pr[i\notin T]\cdot g_t(i) \\ 
			&= \sum_{i\in B} \Pr[i\notin T]\cdot g_t(i) +  \sum_{i\in Q \setminus B} y^*_i\cdot \Pr[i\notin T]\cdot g_t(i) \\ 
			&\leq k\cdot \psi +  \sum_{i\in Q \setminus B} y^*_i\cdot \Pr[i\notin T]\cdot g_t(i) \\ 
			&\leq (1-\alpha') \cdot g_t(\OPT) + k\cdot\psi.
	\end{align*}
		The second equality holds since $y_i^*=1$, for every $i\in Q\setminus B$. The first inequality holds since $C=\{i\}\in \mathcal{C}(\mathcal{J})$ is a configuration, for every $i\in B$. Therefore, $g_t(C)\leq \psi$, and $|B|\leq k$. The second inequality follows from the constraints of Q-LP. Furthermore, 
		$$\begin{aligned}
			\E\left[p(Q\setminus T)\right]&=\sum_{i\in Q}1\cdot p(i)\cdot \Pr[i\notin T] %
			&\geq \sum_{i\in Q}y_i^*\cdot p(i)\cdot \Pr[i\notin T] %
			&\geq\left(1-\alpha'\right)\cdot p(\OPT).
		\end{aligned}$$
		The first inequality holds since $y_i^*\leq 1$, for every $i\in Q$. The second inequality follows from \Cref{clm:solutionQualities}.
	\end{proof}
We now show that the set $Q$ defined in (\ref{eq:def_Q_in_OPT}) satisfies properties~\ref{Q_prop_g_t}. and~\ref{Q_prop_profit}. in the lemma.
	\begin{claim}
		\label{clm:sizeBound}
		For every $t\in \{1,\ldots,k\}$,
		$$\Pr\left[g_t\left(Q\setminus T\right) \geq (1-\alpha')\cdot g_t(\OPT) + k\cdot\psi+\eps^{10} \cdot m\right]\leq \exp\left(-\frac{\eps^{21}\cdot m}{\psi^2}\right).$$
	\end{claim}
\begin{proof}
	Let $t\in \{1,\ldots,k\}$. Define  $q:\mathcal{C}(\mathcal{I})\rightarrow\mathcal{C}(\mathcal{I}_{\OPT})$ by $q(C)=C\cap Q$ for every $C\in \mathcal{C}(\mathcal{I})$. Also, define $f:\mathcal{C}^{\ell}\to\R$ by
	$$f(C_1,\ldots,C_\ell)=\frac{g_t\left(\bigcup_{r\in \{1,\ldots,\ell\}}q(C_r)\right)}{2\cdot\psi}$$
	for every $(C_1,\ldots,C_{\ell})\in \mathcal{C}^{\ell}$, where $\mathcal{C}=\mathcal{C}(\mathcal{I})$. Since $q(C_r)\in\mathcal{C}(\mathcal{I_{\OPT}})$, for every $r\in \{1,\ldots,\ell\}$, by \Cref{lem:solutionTranslator}, there exists $C_1,C_2\in \mathcal{J}$, such that $C_1\cup C_2=q(C_r)$. Thus, $2\cdot\psi\geq g_t(C_1)+g_t(C_2)\geq g_t(q(C))$. By \Cref{lem:boundFunc}, $f$ is self-bounding function.
	We note that
	$$
	\E\left[g_t(T\cap Q)\right]\leq\E\left[g_t(\OPT)\right]
	\leq 2\cdot m\cdot \psi.
	$$
	The first inequality holds since $T\cap Q\subseteq Q\subseteq \OPT$. By \Cref{lem:solutionTranslator}, there exist $2\cdot m$ configurations in $\mathcal{C}(\mathcal{J})$, whose union is $\OPT$, and each configuration $C\in\mathcal{C}(\mathcal{J})$ satisfies $g_t(C)\leq \psi$; thus, the second inequality holds.
	Hence, we have
	$$\begin{aligned}
		\Pr&\left[g_t(Q\setminus T) \geq (1-\alpha') \cdot g_t(\OPT)+k\cdot\psi+\eps^{10}\cdot m\right]\\&
		\leq \Pr\left[g_t(Q\setminus T) \geq \E\left[g_t(Q\setminus T)\right]+\eps^{10} \cdot m\right]\\
		&=\Pr\left[\frac{g_t(T\cap Q)}{2\cdot\psi} \leq \E\left[\frac{g_t(T\cap Q)}{2\cdot\psi}\right]-\frac{\eps^{10} \cdot m}{2\cdot\psi}\right]\\
		&=\Pr\left[f(R_1,\ldots,R_{\ell})\leq \E\left[f(R_1,\ldots,R_{\ell})\right]-\frac{\eps^{10}\cdot m}{2\cdot\psi}\right].
	\end{aligned}$$
	The first inequality holds by \Cref{clm:itemsMean}. The first equality holds since $g(Q\setminus T)=g(Q)-g(Q\cap T)$.
	Thus,
	$$\begin{aligned}
		\Pr&\left[g_t(Q\setminus T) \geq (1-\alpha') \cdot g_t(\OPT)+k\cdot\psi+\eps^{10}\cdot m\right]\\&
		\leq \Pr\left[f(R_1,\ldots,R_{\ell})\leq \E\left[f(R_1,\ldots,R_{\ell})\right]-\frac{\eps^{10}\cdot m}{2\cdot\psi}\right]\\
		&\leq \exp\left(-\frac{\left(\frac{\eps^{10} \cdot m}{2\cdot\psi}\right)^2}{2\cdot \E[f(R_1,\ldots,R_\ell)]}\right)
		\leq \exp\left(-\frac{\eps^{20}\cdot m^2}{2\cdot 2m\cdot4\cdot\psi^2}\right)
		\leq \exp\left(-\frac{\eps^{21}\cdot m}{\psi^2}\right).
	\end{aligned}$$
	For the first inequality we used \Cref{lem:selfBoundLem} with $t=\frac{\eps^{10}\cdot m}{2\cdot\psi}$. The second inequality holds since $\E\left[f(R_1,\ldots,R_{\ell})\right]=\frac{\E\left[g_t(T\cap Q)\right]}{2\cdot\psi}\leq 2\cdot m$. The third inequality holds since $\eps\cdot 16\leq 1$.
\end{proof}
	\begin{claim}
		\label{clm:profitBound}
		$\Pr\left[p(Q\setminus T)\leq (1-\alpha'-\eps)\cdot p(\OPT)\right]\leq \exp\left(-\eps^{-7}\right)$.
	\end{claim}

\begin{proof}[Proof of \Cref{clm:profitBound}]
	Let $\Tilde{p}:I\rightarrow\mathbb{R}_{\geq0}$ such that $\Tilde{p}(i)=p(i)$ for $i\in Q$, and $\Tilde{p}(i)=0$ for $i\notin Q$. Define $f:\mathcal{C}^{\ell}\to\R$ by $$f(C_1,\ldots,C_\ell)=\frac{\Tilde{p}(\bigcup_{i\in [\ell]}C_i)}{\eps^{10}\cdot p(\OPT)}$$ for every $(C_1,\ldots,C_{\ell})\in \mathcal{C}^{\ell}$, where $\mathcal{C}=\mathcal{C}(\mathcal{I})$. Since the instance $\mathcal{I}$ is {\em $\eps$-nice}, $$\eps^{10}\cdot p(\OPT)\geq \max_{C\in \mathcal{C}(\mathcal{I})}p(C)\geq \max_{C\in \mathcal{C}(\mathcal{I})}\Tilde{p}(C).$$ 
	Therefore, by \Cref{lem:boundFunc}, $f$ is self bounding function.
	Now, we can use \Cref{lem:selfBoundLem} and get the following.
	$$\begin{aligned}
		\Pr&\left[p(Q\setminus T)\leq (1-\alpha')\cdot p(\OPT)-\eps\cdot p(\OPT)\right]\\&\leq \Pr\left[p(Q\setminus T)\leq \E\left[p(Q\setminus T)\right]-\eps\cdot p(\OPT)\right]\\
		&=\Pr\left[\Tilde{p}(T)\geq \E\left[\Tilde{p}(T)\right]+\eps\cdot p(\OPT)\right]\\
		&=\Pr\left[\frac{\Tilde{p}(T)}{\eps^{10}\cdot p(\OPT)}\geq \E\left[\frac{\Tilde{p}(T)}{\eps^{10}\cdot p(\OPT)}\right]+\eps^{-9}\right]\\     &=\Pr\left[f(R_1,\ldots,R_{\ell})\geq \E\left[f(R_1,\ldots,R_{\ell})\right]+\eps^{-9}\right].
	\end{aligned}$$
	The first inequality holds by \Cref{clm:itemsMean}. The first equality follows from subtracting $p(Q)$ for both sides and using $p(Q\setminus T)=p(Q)-p(Q\cap T)$.
	Thus,
	$$\begin{aligned}
		\Pr&\left[p(Q\setminus T)\leq (1-\alpha')\cdot p(\OPT)-\eps\cdot p(\OPT)\right]\\&\leq \Pr\left[f(R_1,\ldots,R_{\ell})\geq \E\left[f(R_1,\ldots,R_{\ell})\right]+\eps^{-9}\right]\\
		&\leq \exp\left(-\frac{\eps^{-18}}{2\cdot\E[f(R_1,\ldots,R_{\ell})]+\frac{\eps^{-9}}{3}}\right)\\
		&\leq \exp\left(-\frac{\eps^{-18}}{2\cdot\frac{p(\OPT)}{\eps^{10}\cdot p(\OPT)} +\frac{\eps^{-9}}{3}}\right)\\
		&\leq\exp\left(-\eps^{-7}\right).
	\end{aligned}$$
	The first inequality follows from using \Cref{lem:selfBoundLem} with $t=\eps^{-9}$. In the second inequality we used the inequality $\E\left[\frac{\Tilde{p}(T)}{\eps^{10}\cdot p(\OPT)}\right] \leq \frac{p(\OPT)}{\eps^{10}\cdot p(\OPT)}$. The third inequality holds since $\eps^{-10}\geq \frac{\eps^{-9}}{3}$ and $\eps\cdot3\leq 1$. 
\end{proof}
By \Cref{clm:sizeBound} and \Cref{clm:profitBound}, we have the statement of the lemma. 
\end{proof}
\begin{proof}[Proof of \Cref{lem:secondHalfProfit}]        
	Let $Q$ be the set defined in \Cref{lem:auxilaryQ}, and let
	$F_t$ be the event ``$g_t(Q\setminus T) \geq (1-\alpha')\cdot g_t(S) +k\cdot\psi+ \eps^{10} \cdot m$'', for every $t\in\{1,\ldots,k\}$. Also, let $F_p$ be the event ``$p(Q\setminus T)\leq (1-\alpha'-\eps)\cdot p(\OPT)$''.
	By \Cref{lem:auxilaryQ}, the following holds:
	\begin{enumerate}
		\item $\Pr[F_t]\geq 1-\exp(-\frac{\eps^{21}\cdot m}{\psi^2})$, for every $t\in\{1,\ldots,k\}$.
		\item $\Pr[F_p]\geq1- \exp(-\eps^{-7})$.
	\end{enumerate}
Therefore, by the union bound, we have
	$$\begin{aligned}
		\Pr\left[F_p\cap\bigcap_{t\in \{0,\ldots,k\}}F_t\right]
		&\geq 1-\left(\exp(-\eps^{-7})+\sum _{t\in \{0,\ldots,k\} }\exp\left(-\frac{\eps^{21}\cdot m}{\psi^2}\right)\right) \\
		& = 1 - \exp(-\eps^{-7})-k\cdot\exp\left(-\frac{\eps^{21}\cdot m}{\psi^2}\right)%
		\geq \frac{3}{4}.
	\end{aligned}$$
	The second inequality holds since $k\leq \exp^{[3]}(\eps^{-2})$, $\psi\leq 1$, and $m\geq \exp^{[3]}(\eps^{-30})$.
	\begin{claim}
		\label{clm:configUnion}
		Assuming that $F_p\cap\bigcap_{t\in \{1,\ldots,k\}}F_t$ occurs,  $$\BPOPT(Q\setminus T,w')\leq\left(2\cdot (1-\alpha')+6\cdot \eps^2\right)\cdot m.$$ 
	\end{claim}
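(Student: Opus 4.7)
The plan is to invoke the $(1+\eps^2,\eps^2)$-subset oblivious property of \textnormal{Bin Packing} for the instance $(\OPT,w')$ with the subset $Q\setminus T\subseteq \OPT$. By property \textnormal{(iii)} of \Cref{def:subsetOb} applied with parameters $\rho=1+\eps^2$ and $\eps^2$, we obtain
\[
\BPOPT(Q\setminus T,w')\;\leq\; (1+\eps^2)\cdot\max_{t\in\{1,\ldots,k\}}g_t(Q\setminus T)\;+\;\eps^2\cdot \BPOPT(\OPT,w')\;+\;\delta.
\]

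Next, I would bound each term on the right. Since we assume every event $F_t$ occurs, we have $g_t(Q\setminus T)\leq (1-\alpha')\cdot g_t(\OPT)+k\psi+\eps^{10}\cdot m$ for every $t\in\{1,\ldots,k\}$. Taking the maximum and combining with property \textnormal{(ii)}, which gives $\max_t g_t(\OPT)\leq \BPOPT(\OPT,w')$, yields
\[
\max_{t}g_t(Q\setminus T)\;\leq\;(1-\alpha')\cdot \BPOPT(\OPT,w')+k\psi+\eps^{10}\cdot m.
\]
To control $\BPOPT(\OPT,w')$ itself, I would use \Cref{lem:solutionTranslator}: $\OPT$ is packed into $m$ bins of the 2VMK instance $\mathcal{I}_{\OPT}$, so it can be packed into at most $2m$ bins of the associated MK instance $\mathcal{J}$, hence $\BPOPT(\OPT,w')\leq 2m$.

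Plugging these two bounds into the inequality produced by \Cref{def:subsetOb}, I get
\[
\BPOPT(Q\setminus T,w')\;\leq\;(1+\eps^2)\cdot\bigl((1-\alpha')\cdot 2m+k\psi+\eps^{10}m\bigr)+2\eps^2 m+\delta.
\]
Expanding, the dominant term $(1+\eps^2)\cdot 2(1-\alpha')m$ is at most $2(1-\alpha')m+2\eps^2 m$, and the remaining error is at most $2\eps^2 m + 2k\psi+2\eps^{10}m+\delta$. It then suffices to show that the lower-order contributions are bounded by $2\eps^2 m$, i.e.\
\[
2k\psi+2\eps^{10}m+\delta\;\leq\;2\eps^2 m.
\]
Using \Cref{lem:subsetAppr}, we have $k\leq \exp^{[3]}(\eps^{-2})$, $\psi\leq 1$, and $\delta\leq 4\eps^{-8}$; together with the $\eps$-nice assumption $m\geq \exp^{[3]}(\eps^{-30})$, each of the three summands on the left is easily dwarfed by $\eps^2 m$, which gives the claimed bound $\BPOPT(Q\setminus T,w')\leq (2(1-\alpha')+6\eps^2)\cdot m$.

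The main obstacle is purely bookkeeping: ensuring that all additive slack introduced by the subset-oblivious approximation (the $k\psi$, $\eps^{10}m$ terms coming from the $F_t$ events, the additive $\delta$ from property \textnormal{(iii)}, and the multiplicative $(1+\eps^2)$ factor) can be absorbed into $6\eps^2 m$. This is where the $\eps$-nice hypothesis on $m$ and the explicit bounds on $k,\psi,\delta$ from \Cref{lem:subsetAppr} are used in an essential way; no further probabilistic argument is needed since we are conditioning on $\bigcap_t F_t$ (the event $F_p$ in the intersection is not needed for this particular claim and will be used elsewhere in the proof of \Cref{lem:secondHalfProfit}).
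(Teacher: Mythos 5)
Your proposal is correct and follows essentially the same route as the paper's proof: apply subset-oblivious property (iii) to $Q\setminus T$, bound $\max_t g_t(Q\setminus T)$ via the events $F_t$ together with property (ii), use \Cref{lem:solutionTranslator} to get $\BPOPT(\OPT,w')\leq 2m$, and absorb the $k\psi$, $\eps^{10}m$, and $\delta$ terms into $6\eps^2 m$ using the bounds from \Cref{lem:subsetAppr} and the $\eps$-nice hypothesis on $m$. The minor reordering (invoking property (iii) before bounding the max, rather than after) and your observation that $F_p$ is not needed for this particular claim are both fine and do not change the substance.
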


\begin{proof}
	We note that
	$$\begin{aligned}
		\max_{t\in\{1,\ldots, k\}}g_t(Q\setminus T)
		&\leq (1-\alpha')\cdot \max_{t\in\{1,\ldots, k\}}\{g_t(\OPT)\}+k\cdot\psi+\eps^{10}\cdot m \\
		&\leq (1-\alpha')\cdot \BPOPT(\OPT,w')+k\cdot\psi+\eps^{10}\cdot m \\
		&\leq \left(2\cdot(1-\alpha')+\eps^{10}\right)\cdot m+k\cdot\psi.
	\end{aligned}$$
	The first inequality holds since $\bigcap_{t\in \{1,\ldots,k\}}F_t$ occurs. The second inequality holds since $g_1,\ldots, g_k$ are the $(1+\eps^2,\eps^2)$-subset oblivious functions of $(\OPT,w')$. The third inequality holds since there exist $m$ configurations in $\mathcal{C}(\mathcal{I})$ whose union is $\OPT$. Therefore, by \Cref{lem:solutionTranslator}, there exist in $\mathcal{C}(\mathcal{J})$ $2\cdot m$ configurations whose union is $\OPT$. Thus, $\BPOPT(\OPT,w')\leq 2\cdot m$. We have that
	$$\begin{aligned}
		\BPOPT
		(Q\setminus T,w')
		&\leq (1+\eps^2)\cdot \max_{t\in\{1,\ldots, k\}}\{g_t(Q\setminus T)\}+\eps^2\cdot \BPOPT(\OPT,w')+\delta \\
		&\leq (1+\eps^2)\cdot\left(\left(2\cdot(1-\alpha')+\eps^{10}\right)\cdot m+k\cdot\psi\right)+\eps^2\cdot 2\cdot m+\delta \\
		&\leq \left(2\cdot (1-\alpha')+5\cdot \eps^2\right)\cdot m+2\cdot k\cdot \psi+\delta\\
		&\leq \left(2\cdot (1-\alpha')+5\cdot \eps^2\right)\cdot m+2\cdot \exp^{[3]}(\eps^{-2})+\frac{4}{\eps^8}\\
		&\leq \left(2\cdot (1-\alpha')+6\cdot \eps^2\right)\cdot m.
	\end{aligned}$$
	The first inequality follows from \Cref{def:subsetOb}. The second inequality follows from \Cref{lem:solutionTranslator}.
	The fourth inequality holds since $k\leq \exp^{[3]}(\eps^{-2})$ and $\delta\leq \frac{4}{\eps^8}$. The fifth inequality holds since $\eps^2\cdot m\geq 2\cdot \exp^{[3]}(\eps^{-2})+\frac{4}{\eps^8}$. 
\end{proof}
 By \Cref{clm:configUnion}, there exist $\left(2\cdot (1-\alpha')+6\cdot \eps^2\right)\cdot m$ configurations in $\mathcal{C}(\mathcal{J})$ whose union is $Q\setminus T$.
	Among these configurations, we choose the $(1-\alpha')\cdot m$ most profitable.
	 Let $R$ be the items chosen in these configurations. Then,
	$$\begin{aligned}
		p(R)&\geq\frac{(1-\alpha')\cdot m}{\left(2\cdot (1-\alpha')+6\cdot \eps^2\right)\cdot m}\cdot p(Q\setminus T)\\
		&\geq\frac{(1-\alpha')\cdot m}{\left(2\cdot (1-\alpha')+6\cdot \eps^2\right)\cdot m}\cdot (1-\alpha'-\eps)\cdot p(\OPT)\\
		&= \frac{1-\alpha'}{2\cdot (1-\alpha')+6\cdot \eps^2}\cdot (1-\alpha'-\eps)\cdot p(\OPT)\\
		&\geq \frac{1-\alpha'-\eps}{2+\eps}\cdot p(\OPT)
		\geq \left(\frac{1-\alpha-2\eps}{2}-\frac{\eps}{4}\right)\cdot p(\OPT)
		\geq\left(\frac{1-\alpha}{2}-2\cdot\eps\right)\cdot p(\OPT).
	\end{aligned}$$
	The third inequality holds since $\frac{6\cdot\eps^2}{1-\alpha'}\leq \eps$. The fourth inequality follows from $$\frac{1-\alpha'-\eps}{2+\eps}\geq \frac{1-\alpha'-\eps}{2}-\frac{(1-\alpha'-\eps)\cdot\eps}{4}\geq\frac{1-\alpha'-\eps}{2}-\frac{\eps}{4},$$ and by using $\alpha'\leq\alpha+\eps$.
	Therefore, with probability at least $\frac{3}{4}$, the optimal profit of the MK instance $(Q\setminus T,w',p,(1-\alpha')\cdot m)$ is at least $p(R)$. Since $Q\setminus T \subseteq I\setminus T$, the optimal profit of the MK instance $(I\setminus T,w',p,(1-\alpha')\cdot m)$ is at least $p(R)$. 
	As we obtain a $1-\eps$-approximation for $\cI'$ (using, e.g.,~\cite{jansen2010parameterized}), the profit of Step \ref{MainAlg:MKP} in \Cref{alg:MainAlgorithm} is at least 
	$$\begin{aligned}
		(1-\eps)\cdot p(R)&\geq \left(\frac{1-\alpha}{2}-2\cdot\eps\right)\cdot(1-\eps)\cdot p(\OPT)
		&\geq\left(\frac{1-\alpha}{2}-3\cdot\eps\right)\cdot p(\OPT).
	\end{aligned}$$
\end{proof}

\section{Concluding Remarks}
\label{sec:conclusion}
In this paper we present a randomized $\left(1 - \frac{\ln 2}{2} -\eps\right)\approx 0.653$-approximation algorithm for \textnormal{2VMK}, for every fixed $\eps >0$,
thus improving the ratio of $(1-e^{-1}-\eps)\approx 0.632$, which follows from the results
of~\cite{fleischer2011tight} for the {\sc separable assignment problem}.
To the best of our knowledge, this work is the first direct study of 2VMK in the arena of approximation algorithms.

As an interesting direction for future work, we note that our approach, which combines a technique of~\cite{fleischer2011tight} with a solution for a residual  ($1$-dimensional) MK instance, does not scale to higher dimensions. Specifically, 
for $d$VMK instances where $d \geq 3$, 
the residual algorithm will obtain marginal profit of $\frac{1}{d\cdot m}\cdot\OPT(\cI) $ per configuration, which is always lower than the marginal profit obtained by the randomized rounding of \cite{fleischer2011tight}, due to \eqref{eq:marginal}. Hence, for $d \geq 3$, a better approximation ratio is achieved by random sampling of the whole solution as in \cite{fleischer2011tight}.
We believe that this bottleneck can be resolved by an  iterative randomized rounding approach, similar to the approach used in \cite{kulik2023improved} for {\sc $2$-dimensional vector bin packing}. This approach can potentially lead also to an improved approximation for $2$VMK.

\bibliography{MKP}

\begin{thebibliography}{10}

\bibitem{bannach2020solving}
Max Bannach, Sebastian Berndt, Marten Maack, Matthias Mnich, Alexandra Lassota,
  Malin Rau, and Malte Skambath.
\newblock {Solving Packing Problems with Few Small Items Using Rainbow
  Matchings}.
\newblock In {\em Proc. of MFCS}, pages 11:1--11:14, 2020.

\bibitem{BNA}
Nikhil Bansal, Alberto Caprara, and Maxim. Sviridenko.
\newblock A new approximation method for set covering problems, with
  applications to multidimensional bin packing.
\newblock {\em SIAM Journal on Computing}, pages 1256--1278, 2010.

\bibitem{BCS09}
St{\'e}phane Boucheron, G{\'a}bor Lugosi, and Pascal Massart.
\newblock A sharp concentration inequality with applications.
\newblock {\em Random Structures \& Algorithms}, 16(3):277--292, 2000.

\bibitem{cacchiani2022knapsack}
Valentina Cacchiani, Manuel Iori, Alberto Locatelli, and Silvano Martello.
\newblock Knapsack problems-an overview of recent advances. part ii: Multiple,
  multidimensional, and quadratic knapsack problems.
\newblock {\em Computers \& Operations Research}, 2022.

\bibitem{camati2014solving}
Ricardo~Stegh Camati, Alcides Calsavara, and Luiz Lima~Jr.
\newblock Solving the virtual machine placement problem as a multiple
  multidimensional knapsack problem.
\newblock {\em ICN 2014}, 264, 2014.

\bibitem{chekuri2005polynomial}
Chandra Chekuri and Sanjeev Khanna.
\newblock A polynomial time approximation scheme for the multiple knapsack
  problem.
\newblock {\em SIAM Journal on Computing}, 35(3):713--728, 2005.

\bibitem{fleischer2011tight}
Lisa Fleischer, Michel~X Goemans, Vahab~S Mirrokni, and Maxim Sviridenko.
\newblock Tight approximation algorithms for maximum separable assignment
  problems.
\newblock {\em Mathematics of Operations Research}, 36(3):416--431, 2011.

\bibitem{FC84}
Alan~M Frieze, Michael~RB Clarke, et~al.
\newblock Approximation algorithms for the m-dimensional 0-1 knapsack problem:
  worst-case and probabilistic analyses.
\newblock {\em European Journal of Operational Research}, 15(1):100--109, 1984.

\bibitem{jansen2010parameterized}
Klaus Jansen.
\newblock Parameterized approximation scheme for the multiple knapsack problem.
\newblock {\em SIAM Journal on Computing}, 39(4):1392--1412, 2010.

\bibitem{jansen2012fast}
Klaus Jansen.
\newblock A fast approximation scheme for the multiple knapsack problem.
\newblock In {\em International Conference on Current Trends in Theory and
  Practice of Computer Science}, pages 313--324, 2012.

\bibitem{KPP2004}
Hans Kellerer, Ulrich Pferschy, and David Pisinger.
\newblock {\em Knapsack problems}.
\newblock Springer, 2004.

\bibitem{kulik2023improved}
Ariel Kulik, Matthias Mnich, and Hadas Shachnai.
\newblock Improved approximations for vector bin packing via iterative
  randomized rounding.
\newblock In {\em Proc. of FOCS (to appear)}, 2023.

\bibitem{KS2010}
Ariel Kulik and Hadas Shachnai.
\newblock There is no {EPTAS} for two-dimensional knapsack.
\newblock {\em Information Processing Letters}, 110(16):707--710, 2010.

\bibitem{lassota2022tight}
Alexandra Lassota, Aleksander {\L}ukasiewicz, and Adam Polak.
\newblock {Tight Vector Bin Packing with Few Small Items via Fast Exact
  Matching in Multigraphs}.
\newblock In {\em Proc. of ICALP}, pages 87:1--87:15, 2022.

\bibitem{mcdiarmid1989method}
Colin McDiarmid et~al.
\newblock On the method of bounded differences.
\newblock {\em Surveys in combinatorics}, 141(1):148--188, 1989.

\bibitem{ray2021there}
Arka Ray.
\newblock There is no {APTAS} for 2-dimensional vector bin packing: Revisited.
\newblock {\em arXiv preprint arXiv:2104.13362}, 2021.

\bibitem{MLA03}
Hadas Shachnai and Tami Tamir.
\newblock Approximation schemes for generalized two-dimensional vector packing
  with application to data placement.
\newblock {\em Journal of Discrete Algorithms}, 10:35--48, 2012.

\bibitem{song2008multiple}
Yang Song, Chi Zhang, and Yuguang Fang.
\newblock Multiple multidimensional knapsack problem and its applications in
  cognitive radio networks.
\newblock In {\em MILCOM 2008-2008 IEEE Military Communications Conference},
  pages 1--7. IEEE, 2008.

\bibitem{vazirani2001approximation}
Vijay~V Vazirani.
\newblock {\em Approximation algorithms}, volume~1.
\newblock Springer, 2001.

\bibitem{vondrak2010note}
Jan Vondr{\'a}k.
\newblock A note on concentration of submodular functions.
\newblock {\em arXiv preprint arXiv:1005.2791}, 2010.

\bibitem{woeginger1997there}
Gerhard~J Woeginger.
\newblock There is no asymptotic {PTAS} for two-dimensional vector packing.
\newblock {\em Information Processing Letters}, 64(6):293--297, 1997.

\end{thebibliography}

\appendix

\section{A Reduction to $\eps$-Nice Instances}
\label{sec:reduceEpsNice}
In this section we prove  \Cref{lem:reductionToNice}. We distinguish in the proof between two cases w.r.t. the number of bins in the given 2VMK instance: the case where $m$ is a constant, and the case where $m$ may be arbitrary large.

Consider first the case where $m$ is a constant. Then we show that our problem admits a PTAS. We use a reduction to {\sc $d$-Dimensional Multiple Choice Knapsack} ($d$-MCK). An instance for $d$-MCK is given by $\mathcal{I}=(E,w,p,(I_1,\ldots,I_n))$, where $I$ is the set of items, $w:I\rightarrow \mathbb{R}_{\geq 0}^d$ is the weight function, $p:I\rightarrow \mathbb{R}_{\geq 0}$ is the profit function, and the sets $I_1,\ldots,I_n\subseteq I$ form a partition of $I$. That is, $I_a\cap I_b=\emptyset$ for all $a\neq b\in \{1,\ldots,n\}$, and $\bigcup_{t\in \{1,\ldots,n\}}I_t=I$. A feasible solution for the problem is a set $J\subseteq I$, such that $w_j(J)\leq 1$, for all $j\in \{1,\ldots,d\}$, and $|J\cap I_t|\leq 1$, for all $t\in \{1,\ldots, n\}$. The objective is to find a subset $J$ of maximal total profit, given by $p(J)=\sum_{i\in J}p(i)$. 
By a result of~ \cite{MLA03}, $d$-MCK admits a PTAS for any fixed $d \geq 1$. 
\begin{lemma}
	\label{lem:constantM}
	Let $M>0$ be a constant. Then, there is a PTAS for \textnormal{2VMK} instances $\mathcal{I}=(i,w,p,m)$ which satisfy $m\leq M$.
\end{lemma}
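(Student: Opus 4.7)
The plan is to reduce \textnormal{2VMK} with $m\leq M$ to $d$-MCK with $d=2m$, then invoke the PTAS of~\cite{MLA03}. Since $d\leq 2M$ is a constant, this yields a PTAS for our restricted family of instances.

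Given a 2VMK instance $\mathcal{I}=(I,w,p,m)$, construct a $d$-MCK instance $\mathcal{I}'=(I',w',p',(I_1',\ldots,I_{|I|}'))$ as follows. For each item $i\in I$ create $m+1$ copies $i^{(0)}, i^{(1)},\ldots, i^{(m)}$, where the copy $i^{(j)}$ for $j\in [m]$ represents the event ``item $i$ is assigned to bin $j$'', while $i^{(0)}$ represents ``item $i$ is not assigned''. The weight vector $w'(i^{(j)})\in \mathbb{R}^{2m}_{\geq 0}$ has coordinates $2j-1$ and $2j$ equal to $w_1(i)$ and $w_2(i)$ respectively, and all other coordinates equal to $0$; its profit is $p'(i^{(j)})=p(i)$. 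The dummy copy $i^{(0)}$ has weight $0$ in every coordinate and profit $0$. Define $I_i'=\{i^{(0)}, i^{(1)},\ldots, i^{(m)}\}$; these sets partition $I'$. This reduction is polynomial since $m\leq M$ is constant.

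Next, I would verify that the reduction preserves feasibility and profit exactly. A feasible 2VMK solution $S_1,\ldots,S_m$ induces a $d$-MCK solution $J=\{i^{(j)}~|~j\in [m],~i\in S_j\}\cup \{i^{(0)}~|~i\notin \bigcup_b S_b\}$; the constraint $w(S_j)\leq (1,1)$ in 2VMK translates to $w'_{2j-1}(J)\leq 1$ and $w'_{2j}(J)\leq 1$ in the $d$-MCK instance, and $|J\cap I_i'|=1$ for every $i$ is immediate from the construction. Conversely, a feasible $d$-MCK solution $J$ can be decoded to a 2VMK solution by setting $S_j=\{i\in I~|~i^{(j)}\in J\}$ for $j\in [m]$; feasibility in both dimensions of bin $j$ follows from the capacity constraints on coordinates $2j-1$ and $2j$. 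In both directions the total profit is preserved, so $\OPT(\mathcal{I})=\OPT(\mathcal{I}')$.

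Finally, for any fixed $\eps>0$, applying the $d$-MCK PTAS of~\cite{MLA03} with $d=2m\leq 2M$ (a constant independent of the instance size) produces a $(1-\eps)$-approximate solution for $\mathcal{I}'$ in polynomial time, and the decoding above converts it into a feasible 2VMK solution of profit at least $(1-\eps)\cdot \OPT(\mathcal{I})$. There is no real obstacle here beyond carefully carrying out the encoding; the only conceptual point to verify is that the at-most-one selection per group $I_i'$ correctly captures the requirement that each 2VMK item is assigned to at most one bin, and that the pairing of coordinates $(2j-1,2j)$ correctly captures the two-dimensional capacity of each bin.
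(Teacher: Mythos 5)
Your proposal is correct and takes essentially the same route as the paper: reduce a 2VMK instance with constant $m$ to a $2m$-dimensional MCK instance by creating one copy of each item per bin (with weight concentrated in coordinates $2j-1,2j$), group the copies of each item into a single multiple-choice class, and invoke the PTAS of~\cite{MLA03}. The only difference is your addition of a dummy zero-weight, zero-profit copy $i^{(0)}$, which is superfluous given that the MCK constraint is $|J\cap I_t|\leq 1$ rather than equality; otherwise the encoding, decoding, and profit-preservation argument match the paper's.
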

To prove \Cref{lem:constantM}, we use a reduction to $d$-MCK. Given an instance $\mathcal{I}=(I,w,p,m)$, we define an instance $\mathcal{J}=(I',w',p',(I_1',\ldots,I_n'))$ of $2m$-MCK by generating for every item in $I$ $m$ copies in $I'$; each of these copies represents an assignment of the item to a specific bin. Furthermore, each of these copies has non-zero weights only in the two dimensions which represent this specific bin. The sets $I_1,\ldots,I_n$ are used to guarantee that at most one copy of each item is selected. 
\begin{proof}[Proof of \Cref{lem:constantM}]
	Let $M>0$ be a constant. 
	Given an input $\mathcal{I}=(i,w,p,m)$ for the 2VMK problem, where $m\leq M$, define the following $d$-MCK instance $\mathcal{J}=(I',w',p',(I'_1,\ldots,I'_n))$, where $d=2m$.
The set of items is	$I'=\{(i,r)~|~i\in I, r\in \{1,\ldots, m\}\}$. For any item $(i,r) \in I'$ define $w'_t(i,r)=0$, for every $t\in \{1,\ldots, d\}\setminus \{2r-1,2r\}$ and $w'_{2r-1}(i,r)=w_1(i)$, $w'_{2r}(i,r)=w_2(i)$. That is, $w'(i,r)=(0,\ldots,0,w_1(i),w_2(i),0,\ldots,0)$. The item profits are given by $p'(i,r)=p(i)$, for every $(i,r)\in I'$. Finally, we define a partition of $I'$ to $|I|$ sets , where $I_i'=\{(i,r)~|~r\in \{1,\ldots, m\} \}$, for every $i\in I$. Let $J$ be a 
$(1-\eps)$-approximate solution for the $d$-MCK instance $\mathcal{J}$. 

We now construct a solution for the 2VMK instance.	
Define $m$ configurations $C_1,\ldots,C_m\subseteq\mathcal{C}(\mathcal{I})$ by $C_r=\{i~|~(i,r)\in J\}$, for every $r\in \{1,\ldots, m\}$. Then $C_1, \ldots , C_m$ satisfy the following.

\begin{enumerate}
\item [(i)] For every $r\in \{1,\ldots, m\}$, $w_1(C_r)=w'_{2r-1}(J)$, and $w_2(C_r)=w'_{2r}(J)$. As $J$ is a feasible solution for $\cJ$, we have that for and $r \in \{ 1, \ldots , m\}$, $w_1(C_r) \leq 1$ and $w_2(C_r) \leq 1$. 

\item [(ii)] The sets $C_1,\ldots, C_m $ are disjoint. Suppose by negation that $i\in C_{r}\cap C_{r'}$ for some $r\neq r'$. Then, as $(i,r),(i,r')\in J$, we have that $|J\cap I_i|>1$. A contradiction to the definition of $J$. We conclude that
$p(\bigcup_{r=1}^m  C_r)=p'(J)$.
\end{enumerate}
Hence, $(C_1,\ldots,C_m)$ is a feasible solution for $\mathcal{I}$, of profit  $p(\bigcup_{t=1}^m C_t)=p'(J)$.
	\Cref{clm:solEq} shows the relation between the optimal profits for $\cI$ and $\cJ$.
	\begin{claim}
		\label{clm:solEq} $\OPT(\mathcal{J})\geq\OPT(\mathcal{I})$.
	\end{claim}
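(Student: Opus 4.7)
The plan is to establish $\OPT(\mathcal{J})\geq \OPT(\mathcal{I})$ by exhibiting, for the optimal 2VMK solution of $\mathcal{I}$, a feasible $d$-MCK solution of $\mathcal{J}$ of equal profit. This is precisely the reverse of the construction already used in the proof of \Cref{lem:constantM}, where a solution for $\mathcal{J}$ is turned into one for $\mathcal{I}$.

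First, I would fix an optimal solution $(C_1^*,\ldots,C_m^*)$ for $\mathcal{I}$. Without loss of generality the configurations can be taken to be pairwise disjoint: if an item lies in two of them, removing it from all but one of the sets preserves the packing constraints and does not change $p\left(\bigcup_{r} C_r^*\right)$, since each item is counted only once in the union. I would then define the candidate solution $J = \{(i,r) \in I' : i \in C_r^*\} \subseteq I'$.

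Feasibility of $J$ for the $d$-MCK instance $\mathcal{J}$ then follows directly from the construction of $\mathcal{J}$. Because $w'(i,r)$ is supported only on coordinates $2r-1$ and $2r$, with values $w_1(i)$ and $w_2(i)$ respectively, we get $w'_{2r-1}(J) = \sum_{i \in C_r^*} w_1(i) = w_1(C_r^*) \leq 1$ and $w'_{2r}(J) = w_2(C_r^*) \leq 1$ for every $r \in [m]$ by feasibility of $C_r^*$ in $\mathcal{I}$, while the other coordinates of $w'(J)$ vanish. The multiple-choice constraint $|J \cap I_i'| \leq 1$ for every $i \in I$ holds because the disjointness of the $C_r^*$'s forces each $i$ to appear in at most one $C_r^*$, i.e., in at most one pair $(i,r) \in J$.

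Finally, the profit computation is $p'(J) = \sum_{(i,r) \in J} p(i) = \sum_{r=1}^m \sum_{i \in C_r^*} p(i) = p\left(\bigcup_{r=1}^m C_r^*\right) = \OPT(\mathcal{I})$, where the third equality again uses disjointness. Hence $\OPT(\mathcal{J}) \geq p'(J) = \OPT(\mathcal{I})$. I do not expect any real obstacle here; the only mildly subtle point is the disjointness reduction on the optimal solution, which is standard since the profit is evaluated on the union $\bigcup_r C_r^*$ rather than as a multiset sum.
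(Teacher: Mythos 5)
Your argument is correct and mirrors the paper's proof of \Cref{clm:solEq} almost exactly: both fix an optimal solution for $\mathcal{I}$, assume disjointness w.l.o.g., lift it to $\tilde{J}=\{(i,r):i\in \tilde{C}_r\}$, verify the weight and multiple-choice constraints, and compute the profit. The only difference is that you spell out the w.l.o.g. disjointness reduction explicitly, which the paper leaves implicit.
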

	\begin{proof}
		Let $\Tilde{C}_1,\ldots,\Tilde{C}_m\in \mathcal{C}(\mathcal{I})$ be 
		an optimal solution for $\mathcal{I}$, i.e., $p(\bigcup_{t=1}^m \Tilde{C}_t)=\OPT(\mathcal{I})$. Assume w.l.o.g. that the sets $\Tilde{C}_1,\ldots,\Tilde{C}_m$ are disjoint. Construct a solution $\Tilde{J}$ for the instance $\mathcal{J}$ by $\tilde{J}=\{(i,r)~|~i\in \Tilde{C}_r\}$. 
		Then, $\Tilde{J}$ satisfies the following.
		\begin{enumerate}
\item [(i)] For every $j\in \{1,\ldots, d\}$, if $j$ is odd, then, $w_j'(\Tilde{J})=w_1(\Tilde{C}_{\frac{j+1}{2}})\leq 1$, and if $j$ is even then $w_j'(\Tilde{J})=w_2(\Tilde{C}_{\frac{j}{2}})\leq 1$.

\item [(ii)] For every $t\in \{1,\ldots,n\}$, $|\tilde{J}\cap I_t|\leq 1$, since the sets $\Tilde{C}_1,\ldots,\Tilde{C}_m$ are disjoint.
\item [(iii)] $p'(\Tilde{J})\geq p(\bigcup_{t=1}^m \Tilde{C}_t)=\OPT(\mathcal{I})$.
\end{enumerate}
Therefore, $\OPT(\mathcal{J})\geq \OPT(\mathcal{I})$.
\end{proof}

Hence, we have
	$$p\left(\bigcup_{t=1}^m  C_t\right)=p'(J)\geq (1-\eps)\cdot \OPT(\mathcal{J})\geq (1-\eps)\cdot \OPT(\mathcal{I}).$$
	The first inequality holds since $J$ is a $(1-\eps)$-approximate solution for $\mathcal{J}$, and the second inequality follows from \Cref{clm:solEq}. 
	This completes the proof of the lemma.
\end{proof}

Consider now the case where $m$, the number of bins in the 2VMK instance, may be arbitrarily large. For this case we show that it suffices to solve 2VMK $\eps$-nice instances, as formalized in the next result.
\begin{lemma}
	\label{lem:BigM}
	For any $\eps\in(0,0.01)$ and $\beta\in(0,1-\eps)$, if there is polynomial-time $\beta$-approximation algorithm for with $\eps$-nice \textnormal{2VMK} instances, then there is a polynomial-time $(1-\eps)\cdot\beta$-approximation algorithm for \textnormal{2VMK} instances with number of bins $m>\exp^{[3]}(\eps^{-30})$.
\end{lemma}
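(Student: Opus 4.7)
The plan is to reduce an arbitrary 2VMK instance $\cI$ with $m > \exp^{[3]}(\eps^{-30})$ bins to an $\eps$-nice sub-instance, by iteratively peeling off configurations of very large profit and then invoking the assumed $\beta$-approximation algorithm $\mathcal{A}$ on the residual. As a subroutine I would use the single-bin PTAS obtained from \Cref{lem:constantM} with $M=1$ (equivalently, the Frieze--Clarke PTAS for $2$-dimensional knapsack) to locate approximately-most-profitable configurations.

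The reduction proceeds as follows. First, enumerate geometric guesses $P$ for $\OPT(\cI)$ over powers of $(1+\eps)$; at least one satisfies $\OPT(\cI) \leq P \leq (1+\eps)\OPT(\cI)$. For each guess, set $\cI_0 := \cI$, and at step $j \geq 1$ apply the single-bin PTAS to $\cI_{j-1}$ to obtain a configuration $C_j^*$ with $p(C_j^*) \geq (1-\eps)\max_{C \in \cC(\cI_{j-1})} p(C)$. If $p(C_j^*) > \eps^{22} P$, place $C_j^*$ in a fresh bin, remove its items, decrement the bin count to form $\cI_j$, and continue; otherwise terminate with $k := j-1$ peeled configurations. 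Since the $C_j^*$ are pairwise item-disjoint, their total profit is at most $(1+\eps)P$, and each contributes more than $\eps^{22}P$, we have $k \leq 2\eps^{-22}$. Hence the residual $\cI_r$ has $m-k \geq \exp^{[3]}(\eps^{-30})$ bins, and the termination condition together with the PTAS guarantee ensures $p(C) \leq \eps^{22}P/(1-\eps)$ for every $C \in \cC(\cI_r)$. The final algorithm then runs $\mathcal{A}$ on $\cI_r$ and outputs the union of the peeled configurations with $\mathcal{A}$'s output, returning the best result over all guesses $P$.

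For the analysis, let $H := \sum_{j=1}^{k} p(C_j^*)$. I would lower-bound $\OPT(\cI_r)$ by restricting OPT's packing to items in $I \setminus \bigcup_j C_j^*$ and dropping the $k$ least-profitable of OPT's bins to fit into $m-k$ bins; this loses at most $(k/m)\OPT(\cI) = o(\eps)\OPT(\cI)$ since $k/m \leq 2\eps^{-22}/\exp^{[3]}(\eps^{-30})$. Thus $\OPT(\cI_r) \geq \OPT(\cI) - H - o(\eps)\OPT(\cI)$. When $\cI_r$ is $\eps$-nice, the combined output has profit at least $H + \beta\OPT(\cI_r) \geq \beta\OPT(\cI) + (1-\beta)H - o(\eps)\OPT(\cI) \geq (1-\eps)\beta\OPT(\cI)$, as required.

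The main obstacle is verifying the profit condition $p(C) \leq \eps^{20}\OPT(\cI_r)$ of $\eps$-niceness, which by construction amounts to $\OPT(\cI_r) \geq \eps^{2}P/(1-\eps)$ and fails only when $H$ is within $O(\eps^2)\OPT(\cI)$ of $\OPT(\cI)$. In that degenerate regime the peeled configurations alone already achieve profit $H \geq (1-O(\eps^2))\OPT(\cI) \geq (1-\eps)\beta\OPT(\cI)$ (using $\beta < 1-\eps$ and $\eps < 0.01$), so outputting only the peeled placements suffices. The algorithm therefore returns the better of (i) the peeled configurations alone and (ii) the combined output, covering both regimes; the gap between the exponents $22$ (used in the peeling threshold) and $20$ (required by $\eps$-niceness) provides precisely the slack needed to absorb the $(1\pm\eps)$ factors from the geometric guessing of $\OPT(\cI)$ and the $(1-\eps)$ loss of the single-bin PTAS.
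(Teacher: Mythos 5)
Your approach — iteratively peeling near-maximum-profit configurations with a single-bin PTAS, guessing $\OPT$, and handling a degenerate regime where the peeled profit already dominates — is genuinely different from the paper's. The paper instead sorts items by the density $p(i)/(w_1(i)+w_2(i))$, greedily collects high-density items into a set $S$ until $w_1(S)+w_2(S) \geq \eps^{-40}$, packs $S$ with First-Fit into at most $\eps^{-50}$ extra bins, runs $\mathcal{A}$ on the \emph{item}-residual instance $(I\setminus S, w, p, m)$, and returns the $m$ most profitable of the $m + q$ bins. The density argument gives the configuration-profit bound directly (every remaining item has density below the last item added to $S$, and a configuration has $w_1 + w_2 \leq 2$), with no need to guess $\OPT$ and no case split.

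There is one genuine gap in your proposal: you decrement the bin count at each peeling step and then assert that the residual $\cI_r$ has $m - k \geq \exp^{[3]}(\eps^{-30})$ bins, so that $\cI_r$ can be $\eps$-nice. But the lemma's hypothesis is only $m > \exp^{[3]}(\eps^{-30})$, which for integer $m$ forces $m$ to exceed the threshold by as little as one, while $k$ can be as large as $2\eps^{-22}$. When $m$ is near the threshold, $m - k$ falls below $\exp^{[3]}(\eps^{-30})$ and $\cI_r$ fails the bin-count requirement of $\eps$-niceness, so $\mathcal{A}$ cannot be invoked; your degenerate-case fallback (output the peeled configurations alone) is not triggered by this failure, since it is keyed to $\OPT(\cI_r)$ being small rather than to $m - k$ being too small. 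The natural fix is exactly what the paper does: do not reduce the bin count — keep $m$ bins for the residual instance, combine the $k$ peeled bins with the $m$ bins of $\mathcal{A}$'s output, and return the $m$ most profitable of the $m + k$ bins. This loses only a $\frac{k}{m+k}$ fraction, which your doubly-exponential gap between $k$ and $m$ absorbs easily. A secondary caveat: your $o(\eps)\cdot\OPT$ loss from dropping $k$ of $\OPT$'s bins must ultimately be compared against $\eps\beta\cdot\OPT$, so you should confirm the final inequality survives for arbitrarily small (but fixed) $\beta$; with $k/m$ doubly-exponentially small this is fine, but it deserves an explicit line.
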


We use in the proof of the lemma Algorithm~\ref{alg:AuxAlgorithm}, which calls algorithm First-Fit (FF) as a subroutine. The input for FF is an instance $\cI$ of 2VMK. The output is a feasible packing of all items in $\cI$ in a set of $2$-dimensional bins with unit size in each dimension. First-Fit proceeds by considering the items in arbitrary order and assigning the next item in the list to the first bin which can accommodate the item. If no such bin exists, FF opens a new bin and assigns the item to this bin. (For more details on FF and its analysis see, e.g.,~\cite{vazirani2001approximation}.)
\begin{algorithm}[!h]
	\caption{Reduction to $\eps$-nice instances}
	\label{alg:AuxAlgorithm}
	\SetKwInOut{Configuration}{configuration}
	\SetKwInOut{Input}{input}
	\SetKwInOut{Output}{output}
	\Configuration{A $\beta$-approximation algorithm $\mathcal{A}$ for $\eps$-nice 2VMK   instances.}
	\Input{A 2VMK Instance $\mathcal{I}=(I, w, p, m)$.}
	
	\Output{A solution for the instance $\mathcal{I}$.}
	\begin{algorithmic}[1]
		\STATE For every $i\in I$, let $w'(i)=w_1(i)+w_2(i)$.
		\STATE Sort the items in $I$ in non-increasing order by $\frac{p(i)}{w'(i)}$.
		\STATE Initialize $S\leftarrow\emptyset$.             \STATE\label{AuxAlgorithm:PackS} 
		\While {$w'(S) < \frac{1}{\eps^{40}}$}
		{Add the next item to $S$.}  
		\STATE\label{AuxAlgorithm:SToBins} Pack the items in $S$ into bins $(K_1,\ldots,K_{q})$ using First-Fit. 
		\STATE\label{AuxAlgorithm:resInstance} Solve the residual instance $\mathcal{J}=(I\setminus S,w,p,m)$ using algorithm $\mathcal{A}$. Let $(T_1,\ldots,T_m)$ be the set of bins used for the solution.          
		\STATE\label{AuxAlgorithm:ReturnBins} Return the $m$ most profitable bins among $(K_1,\ldots,K_{q},T_1,\ldots,T_m)$.
	\end{algorithmic}
\end{algorithm}

\begin{proof}[Proof of \Cref{lem:BigM}]
	Let $\eps\in(0,0.01)$ and $\mathcal{A}$ a $\beta$-approximation algorithm for $\eps$-nice instances of 2VMK. Given an instance $\mathcal{I}=(I,w,p,m)$ of 2VMK, where $m>\exp^{[3]}(\eps^{-30})$, let $R \subseteq I$ be the set returned by \Cref{alg:AuxAlgorithm}, and $S$ the set found in Step~\ref{AuxAlgorithm:PackS} of the algorithm. 
	
	\begin{claim}
		\label{clm:PackS}
		The set $S$ is packed in
		at most $\frac{1}{\eps^{50}}$ bins.
	\end{claim}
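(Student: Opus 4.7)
The plan is to bound the number of bins used by First-Fit on $S$ by combining (i) the termination condition of the \texttt{while} loop in Step~\ref{AuxAlgorithm:PackS} (which caps $w'(S)$ from above) with (ii) the standard pairwise property of First-Fit adapted to the $2$-dimensional vector setting (which forces a lower bound on $w'(S)$ in terms of~$q$).

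First I would establish an upper bound on $w'(S)$. The loop stops as soon as $w'(S) \geq \tfrac{1}{\eps^{40}}$, and just before the final item $i^*$ was added the partial sum was strictly below $\tfrac{1}{\eps^{40}}$. Since $w'(i^*) = w_1(i^*)+w_2(i^*) \leq 2$, we get
\[
w'(S) \;<\; \frac{1}{\eps^{40}} + 2.
\]

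Next I would derive the matching lower bound from the First-Fit analysis. Suppose First-Fit produces bins $K_1,\ldots,K_q$ (in the order they are opened), and let $a < b$. The first item $j$ placed into $K_b$ did not fit into $K_a$, so there exists a coordinate $k\in\{1,2\}$ with $w_k(K_a) + w_k(j) > 1$. Hence
\[
w'(K_a) + w'(K_b) \;\geq\; w_k(K_a) + w_k(K_b) \;\geq\; w_k(K_a) + w_k(j) \;>\; 1.
\]
Pairing the bins as $(K_1,K_2),(K_3,K_4),\ldots$, summing these $\lfloor q/2\rfloor$ pairwise inequalities gives $w'(S) = \sum_{t=1}^{q} w'(K_t) > \lfloor q/2 \rfloor$.

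Combining the two bounds yields $\lfloor q/2 \rfloor < \frac{1}{\eps^{40}} + 2$, so $q \leq \frac{2}{\eps^{40}} + 5$. Finally, since $\eps \in (0,0.01)$ we have $\frac{1}{\eps^{10}} \geq 10^{20}$, so $\frac{2}{\eps^{40}} + 5 \leq \frac{1}{\eps^{50}}$, completing the claim. The only real subtlety is verifying the First-Fit pairwise property in the $2$-dimensional vector setting (where "doesn't fit" has to be witnessed in a single coordinate) and correctly accounting for the at-most-$2$ overshoot contributed by the final item added to $S$; both are routine but need to be stated cleanly.
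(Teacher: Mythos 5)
Your proof is correct and follows essentially the same approach as the paper: bound $w'(S)$ from above via the \texttt{while}-loop termination (at most $\eps^{-40}+2$) and from below via the First-Fit pairwise property ($w'(K_a)+w'(K_b)>1$ for any two bins), then compare. The paper phrases it as a contradiction while you phrase it as a direct bound on $q$, but the two ingredients and the final numerical comparison (using $\eps<0.01$) are identical; your spelling out of the coordinate-witness argument for the $2$-dimensional First-Fit pairwise property is a welcome extra clarification that the paper leaves implicit.
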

	\begin{proof}
		Assume that the set $S$ was packed 
		in more than $\frac{1}{\eps^{50}}$ bins, then the following holds:
		$$w'(S)\geq\frac{\frac{1}{\eps^{50}}-1}{2}>\frac{1}{\eps^{40}}+2\geq w'(S).$$
		The first inequality holds since in the packing of $S$, using FF, we have for any pair of consecutive bins $b_1, b_2$, we have $w'(b_1)+w'(b_2)>1$. Thus, 
		$w'(S) \geq \frac{\eps^{-50} -1}{2}$. The third inequality holds by Step~\ref{AuxAlgorithm:SToBins} of the algorithm, and since $w'(i) \leq 2$ for all $i \in I$. Hence, we have a contradiction.
	\end{proof}
	We now distinguish between the following cases:
	\begin{enumerate}
		\item If $p(S)>\beta\cdot p(\OPT) $, then $p(R)\geq p(S)>\beta\cdot p(\OPT)$, since Step \ref{AuxAlgorithm:ReturnBins} chooses the $m$ most profitable configurations, and $S$ contains at most $\frac{1}{\eps^{50}}<m$ configurations. Thus, we have the desired approximation ratio.  
		\item Otherwise, we have $p(S)\leq \beta\cdot p(\OPT)$. Let $\OPT'$ be an optimal solution for the residual instance $\mathcal{J}$. Then 
		$$\begin{aligned}
			p(\OPT')&\geq p(\OPT\setminus S)\\
			&\geq p(\OPT)-p(S)\\
			&\geq (1-\beta)\cdot p(\OPT).
		\end{aligned}$$
		The first inequality holds since $\OPT\setminus S$ is a valid solution for the residual instance, and the third inequality follows from the assumption $p(S)\leq \beta\cdot p(\OPT)$. Let $i^*$ be the last item added to $S$.
		We note that
		$$\begin{aligned}
			\frac{p(i^*)}{w'(i^*)}&\leq \frac{p(S)}{w'(S)}\\
			&\leq \frac{p(\OPT)\cdot \beta}{\frac{1}{\eps^{40}}}\\
			&\leq p(\OPT)\cdot\eps^{40}.
		\end{aligned}$$
		The first inequality follows from the way 
		the set $S$ is constructed by the algorithm.
		The second inequality holds since $p(S)\leq \beta\cdot p(\OPT)$, and from Step \ref{AuxAlgorithm:PackS} of \Cref{alg:AuxAlgorithm}.
		Let $C$ be a configuration of the residual instance $\mathcal{J}$, as defined in Step \ref{AuxAlgorithm:resInstance} of \Cref{alg:AuxAlgorithm}. Then
		$$\begin{aligned}
			p(C)&=\frac{p(C)}{w'(C)}\cdot w'(C)\\
			&\leq \frac{p(i^*)}{w'(i^*)}\cdot 2\\
			&\leq p(\OPT)\cdot\eps^{40}\cdot 2\\
			&\leq p(\OPT')\cdot \eps^{39}\cdot \frac{1}{1-\beta}\\
			&\leq p(\OPT')\cdot \eps^{30}.
		\end{aligned}$$
		The first inequality holds since every item $i\in C$ satisfies $\frac{p(i)}{w'(i)}\leq\frac{p(i^*)}{w'(i^*)}$. The second inequality holds since $\frac{p(i^*)}{w'(i^*)}\leq p(\OPT)\cdot\eps^{40}$. The third inequality follows from $p(\OPT)\leq p(\OPT')\cdot \frac{1}{1-\beta}$ and $\eps\cdot 2 \leq 1$. The fourth inequality holds since $\frac{1-\beta}{\eps^{9}}\geq 1$. Therefore, the residual instance $\mathcal{J}$ is $\eps$-nice. 
		\begin{claim}
			\label{clm:profitResS}
			$p(S)+p(\OPT')\geq p(\OPT)$.
		\end{claim}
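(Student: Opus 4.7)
The plan is to bound $p(\OPT')$ from below by the profit that $\OPT$ gains from items outside $S$, and then combine with the trivial bound $p(S) \geq p(\OPT \cap S)$.

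Concretely, I would first observe that $\OPT \setminus S$ is a feasible solution for the residual instance $\mathcal{J} = (I \setminus S, w, p, m)$: its items lie in $I \setminus S$, and they can be packed into the same $m$ bins used by $\OPT$ (simply remove every item of $S$ from its bin, which only reduces the load in each coordinate). Since $\OPT'$ is an optimal solution for $\mathcal{J}$, this yields
\[
p(\OPT') \geq p(\OPT \setminus S) = p(\OPT) - p(\OPT \cap S).
\]

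Next, because $\OPT \cap S \subseteq S$ and profits are non-negative, we have $p(S) \geq p(\OPT \cap S)$. Adding the two inequalities gives
\[
p(S) + p(\OPT') \geq p(\OPT \cap S) + p(\OPT) - p(\OPT \cap S) = p(\OPT),
\]
which is the claim. There is no real obstacle here: the argument is a two-line set-theoretic manipulation together with the observation that restricting an $\OPT$ packing to a subset of items preserves feasibility. The only detail worth noting is that the inequality $p(\OPT') \geq p(\OPT \setminus S)$ is exactly the same observation used earlier in Case 2 of the proof of \Cref{lem:BigM}, so the claim essentially records a fact already implicit in that case analysis.
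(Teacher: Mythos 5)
Your proof is correct and follows essentially the same route as the paper: both argue that $\OPT \setminus S$ remains a feasible solution for the residual instance $\mathcal{J}$ (the paper does this by explicitly forming configurations $C_t \setminus S$, you do it by observing that removing items from a bin preserves feasibility), and both then combine $p(\OPT') \geq p(\OPT \setminus S)$ with $p(S) \geq p(\OPT \cap S)$ via the decomposition $p(\OPT) = p(\OPT \cap S) + p(\OPT \setminus S)$.
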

		\begin{proof}
			Let $C_1,\ldots, C_m\subseteq\mathcal{C}(\mathcal{I})$ be $m$ configurations, whose union is $\OPT$. Define $\Tilde{C}_1,\ldots\Tilde{C}_m=C_1\setminus S,\ldots, C_m\setminus S\subseteq\mathcal{C}(\mathcal{J})$ be $m$ configurations. Observe that $$p(\OPT)=p(\OPT\cap S)+p(\OPT\setminus S)\leq p(S)+p\left(\bigcup_{t\in \{1,\ldots,m\}}\Tilde{C}_t\right)\leq p(S)+p(\OPT').$$
		\end{proof}
		The profit of the solution returned by  \Cref{alg:AuxAlgorithm} is at least:
		$$\begin{aligned}
			\left(\frac{m}{m+\frac{1}{\eps^{50}}}\right)\cdot\left(p(S)+\beta\cdot p(\OPT')\right)&\geq (1-\frac{\frac{1}{\eps^{50}}}{m+\frac{1}{\eps^{50}}})\cdot \beta\cdot p(\OPT)\\
			&=(1-\frac{1}{\eps^{50}\cdot m+1})\cdot \beta\cdot p(\OPT)
			\\
			&\geq (1-\eps)\cdot \beta\cdot p(\OPT).
		\end{aligned}$$
		The first inequality follows from \Cref{clm:profitResS}. The second inequality holds since $m\geq \exp^{[3]}(\eps^{-30})$.
	\end{enumerate}
\end{proof}
\begin{proof}[Proof of \Cref{lem:reductionToNice}]
	Let $\eps\in(0,0.01)$ and let $\mathcal{A}$ be a $\beta$-approximation algorithm for $\eps$-nice 2VMK instances.
	We use $\cA$ to derive an approximation algorithm $\mathcal{B}$ for general 2VMK instances. Let $\mathcal{I}=(I,w,p,m)$ be a 2VMK instance. Algorithm $\mathcal{B}$ proceeds as follows.

	\begin{enumerate}
		\item [(i)] If $m\leq \exp^{[3]}(\eps^{-30})$, then $\mathcal{B}$ uses \Cref{lem:constantM} to obtain a $(1-\eps)$-approximation for $\cI$, which is also a $(1-\eps)\cdot\beta$-approximation. 
		\item [(ii)] If $m> \exp^{[3]}(\eps^{-30})$ then $\mathcal{B}$ uses \Cref{lem:BigM} to obtain $(1-\eps)\cdot \beta$-approximation for $\cI$.
	\end{enumerate}
\end{proof}

\section{APX-hardness}
\label{sec:APX}
In this section we prove \Cref{lem:APXHard}.
We use a reduction from 2VBP to show that a PTAS for 2VMK would imply the existence of an APTAS for 2VBP. We rely on the following result of Ray \cite{ray2021there}, which addresses a flaw in an earlier proof of Woeginger~\cite{woeginger1997there}.

\begin{theorem}[\cite{ray2021there}]
	\label{thm:no_aftpas}
	Assuming $\textnormal{P}\neq \textnormal{NP}$, there is no \textnormal{APTAS} for  \textnormal{2VBP}.
\end{theorem}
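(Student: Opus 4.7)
The plan is to establish this via a gap-introducing reduction from a strongly NP-hard problem whose YES/NO instances can be mapped to 2VBP instances whose optimal bin counts differ by at least a constant multiplicative factor, in such a way that the gap persists as the instance size (and the number of bins) grows to infinity. This rules out an APTAS, since any asymptotic $(1-\varepsilon)$-approximation with $\varepsilon$ smaller than the gap would let us distinguish YES from NO in polynomial time.

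For the source of hardness I would use 3-Partition, or more efficiently a bounded-occurrence version of 3-Dimensional Matching; both are strongly NP-hard and admit constant-factor inapproximability. Given such an instance with $n$ elements, I construct a 2VBP instance with $\Theta(n)$ items. Each element $e$ is mapped to an item with a carefully chosen weight vector $w(e) \in [0,1]^2$ — the two coordinates are chosen so that the only way for a bin to be filled up to capacity $(1,1)$ in both coordinates is for its items to correspond to a ``structured'' combinatorial object in the source problem (e.g.\ a triple that sums to the 3-Partition target, or a matched triple in 3DM). The two-dimensional structure is essential: in one dimension the weights encode the numerical partition condition, and in the other they enforce a combinatorial incidence/compatibility constraint, so that fitting three items in a bin forces both conditions simultaneously.

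The analysis then splits into completeness and soundness. For completeness, a YES instance of the source problem yields a packing in which every bin is one of these structured configurations, achieving $n/3$ bins. For soundness, a NO instance forces at least a constant fraction of bins to be ``defective'' (containing fewer than three items, or three items that leave strictly positive slack in at least one coordinate), so the optimum must exceed $(1+\gamma)\cdot n/3$ for some absolute $\gamma > 0$. To convert this into an \emph{asymptotic} multiplicative gap, I would take many independent copies of the construction (the gap being preserved under disjoint union) and, if needed, pad with trivially packable ``filler'' items so that the number of bins grows without bound while the ratio $\textnormal{OPT}_{\mathrm{NO}}/\textnormal{OPT}_{\mathrm{YES}}$ stays bounded below by $1+\gamma'$ for some fixed $\gamma' > 0$.

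The main obstacle — and precisely the point at which Woeginger's 1997 argument was incomplete and which Ray's proof corrects — is the soundness analysis. One has to rule out clever bin configurations that mix items from different ``intended'' groups of the source instance and exploit the two-dimensional capacity $(1,1)$ to pack more densely than the intended gadgets suggest. The correct argument requires an exhaustive case analysis over all feasible multisets of item weights that can co-occupy a bin, showing that any configuration deviating from the intended ``triple'' pattern leaves quantifiable slack in at least one of the two coordinates; summing these slacks over the whole packing then yields the claimed constant lower bound on the excess number of bins. Getting the weight values and this case analysis right, so that no ``diagonal'' packing trick escapes the argument, is the delicate heart of the proof.
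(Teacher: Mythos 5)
This theorem is not proved in the paper at all; the authors cite it directly from Ray~\cite{ray2021there} and build the rest of \Cref{sec:APX} on top of it as a black box. So there is no ``paper's own proof'' to compare against --- the relevant comparison is between your sketch and Ray's argument.

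Your sketch correctly identifies the general \emph{shape} of the argument: a gap-introducing reduction from a strongly NP-hard combinatorial problem (Ray in fact reduces from bounded-occurrence 3-Dimensional Matching), with one coordinate of the weight vector enforcing a numerical packing condition and the other enforcing a combinatorial compatibility condition, followed by amplification to make the gap asymptotic. That is the right picture. But as written this is a plan, not a proof. You have not specified the weight vectors, you have not stated which configurations can feasibly co-occupy a bin, and you have not carried out the soundness case analysis --- you explicitly defer it as ``the delicate heart of the proof.'' That deferral is exactly where the substance lies: the entire reason Ray's paper exists is that Woeginger's 1997 sketch of this same plan was wrong in the soundness step, because there are feasible bin configurations mixing items from different gadgets that Woeginger's slack-counting argument did not account for. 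A proof that says ``one must rule out clever mixed configurations by an exhaustive case analysis'' without actually performing that analysis, or even pinning down the item weights over which the case analysis would range, has not closed the gap that made the theorem nontrivial in the first place. To turn this into a proof you would need to (i) fix a concrete source problem with a known constant inapproximability gap, (ii) write down explicit two-dimensional weight vectors for every gadget item, (iii) enumerate the feasible bin contents and show each deviant one wastes a quantifiable amount of capacity in at least one coordinate, (iv) sum these wastes to bound $\mathrm{OPT}_{\mathrm{NO}}$ from below, and (v) verify that disjoint-union amplification preserves the multiplicative gap without introducing new cross-gadget packings. Steps (ii)--(iv) are entirely absent from your write-up.
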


\begin{proof}[Proof of \Cref{lem:APXHard}]
	Assume towards a contradiction that there is a PTAS $\left\{\mathcal{A}_{\eps}\right\}$ for 2VMK. That is, for every $\eps>0$, $\mathcal{A}_{\eps}$ is a polynomial-time  $(1-\eps)$-approximation algorithm for 2VMK.  
	
	In \Cref{alg:APX-Alg} we use $\{A_{\eps}\}$ to derive an APTAS for 2VBP. The algorithm calls as a subroutine algorithm First-Fit described in \Cref{sec:reduceEpsNice}.
Given a 2VBP instance, we use in \Cref{alg:APX-Alg} the notion of associated 2VMK instance.	
\begin{definition}
Given a \textnormal{2VBP} instance $\mathcal{I}=(I,w)$ and $t \geq 1$, we define its $t$-associated \textnormal{2VMK} instance $\mathcal{I'}=(I',w',p',m')$ as follows.
The set of items is $I'=I$, and the weight function is $w'=w$. The profit of each item $i \in I'$ is $p'(i)=w_1(i)+w_2(i)$, and the number of bins is $m'=t$. 
\end{definition}

	\begin{algorithm}[!h]
		\caption{Reduction from 2VBP}
		\label{alg:APX-Alg}
		\SetKwInOut{Configuration}{configuration}
		\SetKwInOut{Input}{input}
		\SetKwInOut{Output}{output}
		\Configuration{A PTAS  $\left\{\mathcal{A}_{\eps}\right\} $ for 2VMK and  $\eps>0$}
		\Input{A 2VBP instance $\mathcal{I}=(I,w)$.}
		
		\Output{A solution for $\mathcal{I}$ which uses at most $(1+\eps)\cdot\BPOPT(\mathcal{I})+2$ bins.}
		
		\For{\label{APX-alg:iterations} 
			 $t=1$ to $|I|$}{             
			Let 
			 $\mathcal{I'}=(I',w',p',m')$ be the $t$-associated 2VMK instance of $\mathcal{I}$
			 
			 Use $\mathcal{A}_{\eps'}$ to solve $\mathcal{I}'$, where $\eps'= \frac{\eps}{16}$.  Let $C_1,\ldots, C_{m'}\subseteq I$ be the returned solution.
			 
			 Define $S=\bigcup_{j=1}^{m'} C_j$ and  pack the residual items $I\setminus S$ using First-Fit. \label{APX-alg:first_fit}
			 
			 Add $C_1,\ldots, C_m$ along with the packing of $S$ to a list of candidate solutions. 
		}
		
		Return the candidate solution with the smallest number of bins used.
	\end{algorithm} 
	\begin{claim}
		\label{clm:AlgStep}
		For any 2VBP instance $\mathcal{I}$  and $\eps>0$, \Cref{alg:APX-Alg} returns a solution which uses at most $(1+\eps)\cdot\BPOPT(\mathcal{I})+2$ bins.
	\end{claim}
	\begin{proof}
		Let $R=\BPOPT(\mathcal{I})$ and let $\mathcal{I'}=(I',w',p',m')$ be the $R$-associated instance of $\mathcal{I}$. Note that $\OPT(\mathcal{I}')=\sum_{i\in I}p'(i)$, since we can pack all the items $I$ in $R$ bins with the weight function $w$. 
		Consider the iteration of Step \ref{APX-alg:iterations} in \Cref{alg:APX-Alg} where $t=R$. We show that the number of bins used by the solution is at most $(1+\eps)\cdot \BPOPT(I,w)+2$.
		Observe that $S$, the set in Step \ref{APX-alg:first_fit} of \Cref{alg:APX-Alg}, satisfies $p'(S)\geq (1-\frac{\eps}{16})\cdot\OPT(\mathcal{I}')$. Thus, $p'(I\setminus S)\leq \frac{\eps}{16}\cdot \OPT(\mathcal{I}')$.
		It follows that
		$$\begin{aligned}
			w_1(I\setminus S)+w_2(I\setminus S)&\leq \frac{\eps}{16}\cdot \OPT(\mathcal{I}')\\
			&= \frac{\eps}{16}\cdot(w_1(I)+w_2(I))\\
			&\leq \frac{\eps}{16}\cdot 2\max\{w_1(I),w_2(I)\}\\
			&\leq \frac{\eps\cdot\BPOPT(I,w)}{8}.
		\end{aligned}$$
		The third inequality holds since every bin $b\in\mathcal{C}(\mathcal{I}')$ satisfies $w_1(b),w_2(b)\leq 1$. 
		In the packing of the residual items $I\setminus S$ using First-Fit, every two consecutive bins $b_1,b_2$ satisfy $w_1(b_1\cup b_2)+w_2(b_1\cup b_2)> 1$. %
		Assume the items in $I\setminus S$ are packed in at least $\eps\cdot \BPOPT(I,w)+2$ bins in Step \ref{APX-alg:first_fit} of \Cref{alg:APX-Alg}. Then,
		$$w_1(I\setminus S)+w_2(I\setminus S)\geq \floor{\frac{\eps\cdot \BPOPT(I,w)+2}{2}}>\frac{\eps\cdot \BPOPT(I,w)+2}{2} -1\geq w_1(I\setminus S)+w_2(I\setminus S)$$
		The first inequality holds since there are at least $\floor{\frac{\eps\cdot \BPOPT(I,w)+2}{2}}$ disjoint pairs of consecutive bins, and each pair $b_1,b_2$ satisfies $w_1(b_1\cup b_2)+w_2(b_1\cup b_2)\geq 1$. 
		This is a contradiction.
		Therefore,  at most $\eps\cdot\BPOPT(I,w)+2$ bins are used for the items $I\setminus S$.
		Hence, \Cref{alg:APX-Alg} returns a solution with at most $(1+\eps)\cdot \BPOPT(\mathcal{I},w)+2$ bins.
	\end{proof}

	For every constant $\eps>0$, it also holds that \Cref{alg:APX-Alg} runs in polynomial time. Thus, 
	 by \Cref{clm:AlgStep}, it follows that \Cref{alg:APX-Alg} is an APTAS for 2VBP. 
\end{proof}

\section{Omitted proofs}
\label{sec:ommited}
\begin{proof}[Proof of \Cref{lem:solutionTranslator}]
	Let $A\subseteq I$ be a set of items which can be packed in $q$ bins of $\mathcal{I}$. For any $t\in \{1,\ldots,q\}$, let $A_t\subseteq I$ be the subset of items assigned to bin $t$, i.e., $A_t=\{i\in I~|~\textit{i in bin t}\}$. Recall that $w_1(A_t)$ and $w_2(A_t)$ are the weights of bin $t$ in dimensions $1$ and $2$, respectfully. Define the sets $B_1,\ldots,B_q,C_1,\ldots,C_q$ as follows.
	\begin{enumerate} 
		\item $B_t=\{i\in A_t|, w_1(i)\geq w_2(i)\}$, for every $t\in\{1,\ldots,q\}$.
		\item $C_t=\{i\in A_t|, w_1(i) < w_2(i)\}$, for every $t\in\{1,\ldots,q\}$.
	\end{enumerate}
	We note that the following holds for every $t\in\{1,\ldots,q\}$.
	\begin{enumerate} 
		\item $w'(B_t)=\max\{w_1(B_t),w_2(B_t)\}=w_1(B_t)\leq w_1(A)\leq 1$.
		\item $w'(C_t)=\max\{w_1(C_t),w_2(C_t)\}=w_2(C_t)\leq w_2(A)\leq 1$.
	\end{enumerate}
	Let $S\subseteq I'$ be a set of $2\cdot q$ configurations such that $(S_1,\ldots,S_q,S_{q+1},\ldots,S_{2q})=(B_1,\ldots,B_q,C_1,\ldots,C_q)$. Since each configuration $S_1,\ldots,S_{2q}$ is of weight at most 1, the set $S$ can be packed in $2\cdot q$ bins of $\mathcal{I}'$. $S$ contains all the items in $A$, and only them.
\end{proof}	
\begin{proof}[Proof of \Cref{lem:boundFunc}]
	Define $n$ functions $f_1,\ldots,f_n:\mathcal{C}^{\ell-1}\rightarrow \mathbb{R}$ as follows: $$f_t(C_1,\ldots,C_{t-1},C_{t+1},\ldots,C_\ell)=f(C_1,\ldots,C_{t-1},\emptyset, C _{t+1},\ldots,C_\ell).$$
	Let $x=(C_1,\ldots,C_\ell)$ be a vector of configurations.
	The following holds for every $t\in \{1,\ldots, \ell \}$, $$f(x)-f_t(x^{(i)})=\frac{h(\bigcup_{j\in \{1,\ldots,\ell\}}C_j)-h(\bigcup_{j\in \{1,\ldots,t-1,t+1,\ldots,\ell\}}C_j)}{\eta}=\frac{h(C_t\setminus\bigcup_{j\in \{1,\ldots,t-1,t+1,\ldots,\ell\}}C_j)}{\eta}.$$
	Now, we show that $f$ and $f_1,\ldots,f_\ell$ satisfy the conditions in \Cref{def:self_bounding}.
	\begin{enumerate}
		\item $0\leq f(x)-f_t(x^{(t)})\leq 1$:
		$$f(x)-f_t(x^{(t)})=\frac{h(C_t\setminus\bigcup_{j\in \{1,\ldots,t-1,t+1,\ldots,\ell\}}C_j)}{\eta}\geq 0.$$
		The last inequality holds since both $h(C_t\setminus\bigcup_{j\in \{1,\ldots,t-1,t+1,\ldots,\ell\}}C_j)$ and $\eta$ are non-negative.
		$$f(x)-f_t(x^{(t)})=\frac{h(C_t\setminus\bigcup_{j\in \{1,\ldots,t-1,t+1,\ldots,\ell\}}C_j)}{\eta}\leq \frac{h(C_t)}{\eta}\leq 1.$$
		The last inequality holds by the definition of $\eta$.
		\item $\sum_{t\in \{1,\ldots,\ell\}}f(x)-f_t(x^{(t)})\leq f(x)$:
		$$\begin{aligned}                    
			\sum_{t\in \{1,\ldots,\ell\}}f(x)-f_t(x^{(t)})=&\sum_{t\in \{1,\ldots,\ell\}}\frac{h(C_t\setminus\bigcup_{j\in \{1,\ldots,t-1,t+1,\ldots,\ell\}}C_j)}{\eta}\\               &=\frac{h(\bigcup_{t\in \{1,\ldots,\ell\}}C_t\setminus\bigcup_{j\in \{1,\ldots,t-1,t+1,\ldots,\ell\}}C_j)}{\eta}\\
			&\leq \frac{h(\bigcup_{t\in \{1,\ldots,\ell\}}C_t)}{\eta} \\
			&= f(x).
		\end{aligned}$$
		The second equality holds since the sets are disjoint.
	\end{enumerate}
	We showed that $f$ satisfies the conditions in \Cref{def:self_bounding}. Thus, $f$ is self-bounding.
\end{proof}

\end{document}